\documentclass{article}

\usepackage{arxiv}

\usepackage[utf8]{inputenc} 
\usepackage[T1]{fontenc}    
\usepackage{algorithmic}
\usepackage{amsmath}
\usepackage{amsthm}
\usepackage{amssymb}
\usepackage{balance}
\usepackage{graphicx}
\usepackage{multirow}
\usepackage{subcaption}
\usepackage{xspace}
\usepackage{url}
\usepackage{weiwAlgorithm}
\usepackage{balance}
\usepackage{makecell}

\sloppy
\newcommand{\stitle}[1]{\vspace{1ex} \noindent{{\bf #1}}}

\newcommand{\sstitle}[1]{\vspace{1ex} \noindent{\textit{ #1}}}

\newcommand{\kw}[1]{{\ensuremath {\mathsf{#1}}}\xspace}

\newcommand{\kwnospace}[1]{{\ensuremath {\mathsf{#1}}}}

\long\def\comment#1{}

\newcommand{\reffig}[1]{Figure~\ref{fig:#1}}
\newcommand{\refsec}[1]{Section~\ref{sec:#1}}
\newcommand{\reftable}[1]{Table~\ref{tab:#1}}
\newcommand{\refalg}[1]{Algorithm~\ref{alg:#1}}
\newcommand{\refeq}[1]{Equation~\ref{eq:#1}}
\newcommand{\refdef}[1]{Definiton~\ref{def:#1}}

\newcommand{\refex}[1]{Example~\ref{ex:#1}}

\newcommand{\ttjoin}{\kwnospace{Twin}\kwnospace{Twig}\kw{Join}}

\newcommand{\starjoin}{\kwnospace{Star}\kw{Join}}
\newcommand{\multiwayjoin}{\kwnospace{Multiway}\kw{Join}}
\newcommand{\cliquejoin}{\kwnospace{Clique}\kw{Join}}
\newcommand{\optjoin}{\kwnospace{BiG}\kw{Join}}
\newcommand{\bigjoin}{\kwnospace{BiG}\kw{Join}}
\newcommand{\crystaljoin}{\kwnospace{Crystal}\kw{Join}}
\newcommand{\psgl}{\kw{PSgL}}

\newcommand{\cflmatch}{\kwnospace{CFL}\kw{Match}}

\newcommand{\genericjoin}{\code{Generic}\code{Join}\xspace}

\newcommand{\er}{\kwnospace{ER}\xspace}
\newcommand{\power}{\kwnospace{PR}\xspace}
\newcommand{\eaat}{\textsc{Bin}\textsc{Join}\xspace}
\newcommand{\vaat}{\textsc{WOpt}\textsc{Join}\xspace}
\newcommand{\other}{\textsc{Others}\xspace}
\newcommand{\rep}{\textsc{Full}\textsc{Rep}\xspace}
\newcommand{\multiway}{\textsc{Shr}\textsc{Cube}\xspace}
\newcommand{\trindex}{\code{Tr}\code{Indexing}\xspace}
\newcommand{\batching}{\code{Batching}\xspace}
\newcommand{\compression}{\code{Compression}\xspace}
\newcommand{\timely}{\code{Timely}\xspace}
\newcommand{\code}[1]{\texttt{#1}}
\newcommand{\ttwig}{\code{Twin}\code{Twig}\xspace}

\newcommand{\Star}{\code{Star}\xspace}

\newcommand{\clique}{\code{Clique}\xspace}

\newcommand{\oom}{\code{OOM}\xspace}
\newcommand{\timeout}{\code{OT}\xspace}

\newcommand{\binaryjoinnotrindex}{\eaat{\tiny (w.o.t.)}\xspace}
\newcommand{\binaryjoinnobatching}{\eaat{\tiny (w.o.b.)}\xspace}
\newcommand{\binaryjoinnocompression}{\eaat{\tiny (w.o.c.)}\xspace}

\newcommand{\genericjoinnotrindex}{\vaat{{\tiny (w.o.t.)}}\xspace}

\theoremstyle{definition}
\newtheorem{example}{Example}[section]
\newtheorem{definition}{Definition}[section]
\newtheorem{theorem}{Theorem}[section]
\newtheorem{remark}{Remark}[section]

\title{A Survey and Experimental Analysis of Distributed Subgraph Matching}

\author{
  Longbin~Lai \\
  School of Computer Science and Engineering \\
  UNSW, Sydney \\
  NSW, 2052 \\
  \texttt{llai@cse.unsw.edu.au} \\
   \And
 Zhu~Qing \\
  School of computer science and software engineering \\
  East China Normal University, China\\
  Shanghai, China \\
  \texttt{zhuqing@stu.ecnu.edu.cn} \\
  \And
  Zhengyi~Yang \\
  School of computer science and software engineering \\
  UNSW, Sydney \\
  NSW, 2052 \\
  \texttt{zyang@cse.unsw.edu.au} \\
  \And
  Xin~Jin \\
  School of computer science and software engineering \\
  East China Normal University, China\\
  Shanghai, China \\
  \texttt{xinjin@stu.ecnu.edu.cn} \\
  \And
  Zhengmin~Lai \\
  School of computer science and software engineering \\
  East China Normal University, China\\
  Shanghai, China \\
  \texttt{zmlai@stu.ecnu.edu.cn} \\
  \And
  Ran~Wang \\
  School of computer science and software engineering \\
  East China Normal University, China\\
  Shanghai, China \\
  \texttt{rwang@stu.ecnu.edu.cn} \\
  \And
  Kongzhang~Hao \\
  School of computer science and software engineering \\
  UNSW, Sydney \\
  NSW, 2052 \\
  \texttt{khao@cse.unsw.edu.au} \\
  \And
  Xuemin~Lin \\
  School of computer science and software engineering \\
  UNSW, Sydney \\
  NSW, 2052 \\
  \texttt{lxue@cse.unsw.edu.au} \\
  \And
  Lu~Qin \\
  Centre for Artificial Intelligence \\
  University of Technology, Sydney \\
  NSW, 2007 \\
  \texttt{lu.qin@uts.edu.au} \\
  \And
  Wenjie~Zhang \\
  School of computer science and software engineering \\
  UNSW, Sydney \\
  NSW, 2052 \\
  \texttt{zhangw@cse.unsw.edu.au} \\
  \And
  Ying~Zhang \\
  Centre for Artificial Intelligence \\
  University of Technology, Sydney \\
  NSW, 2007 \\
  \texttt{ying.zhang@uts.edu.au} \\
  \And
  Zhengping~Qian \\
  Alibaba Group \\
  Hangzhou, China \\
  \texttt{zhengping.qzp@alibaba-inc.com} \\
    \And
  Jingren~Zhou \\
  Alibaba Group \\
  Hangzhou, China \\
  \texttt{jingren.zhou@alibaba-inc.com}
}

\begin{document}
\maketitle

\begin{abstract}
Recently there emerge many distributed algorithms that aim at solving subgraph matching at scale. Existing algorithm-level comparisons failed to provide a systematic view to the pros and cons of each algorithm mainly due to the intertwining of strategy and optimization. In this paper, we identify four strategies and three general-purpose optimizations from representative state-of-the-art works. We implement the four strategies with the optimizations based on the common \timely dataflow system for systematic strategy-level comparison. Our implementation covers all representation algorithms. We conduct extensive experiments for both unlabelled matching and labelled matching to analyze the performance of distributed subgraph matching under various settings, which is finally summarized as a practical guide. 
\end{abstract}

\keywords{Graph Pattern Matching, Query Optimization, Distributed Processing, Experiments}

\section{Introduction}
\label{sec:intro}
Given a query graph $Q$ and a data graph $G$, subgraph matching is defined as finding all subgraph instances of $G$ that are isomorphic to $Q$. In this paper, we assume that the query graph and data graph are undirected \footnote{Our implementation can seamlessly handle directed case.} simple graphs, and may be unlabelled or labelled. In this work, we mainly focus on unlabelled case given that most distributed algorithms are developed under this setting. We also demonstrate some results of labelled case due to its practical usefulness. Subgraph matching is one of the most fundamental operations in graph analysis, and has been used in a wide spectrum of applications \cite{Bi2016, Han2013, Lai2016,  Luo2008, Luo2011}. 

As subgraph matching problem is in general computationally intractable \cite{Shamir97}, and data graph nowadays is growing beyond the capacity of one single machine, people are seeking efficient and scalable algorithms in the distributed context. Unless otherwise specified, in this paper we consider a simple \textbf{hash partition} of the graph data, that is the graph is randomly partitioned by vertices, and each vertex's neighbors will be placed in the same partition. 

By treating query vertices as attributes and the matched results as relational tables, we can express subgraph matching via natural joins. The problem is accordingly transformed into seeking optimal join plan, where the optimization goal is typically to minimize the communication cost. In this paper, we focus on an in-depth survey and comparison of representative distributed subgraph matching algorithms that follow such join scheme.

\subsection{State-of-the-arts.} 
In order to solve subgraph matching using join, existing works studied various join strategies, which can be categorized into three classes, namely ``Binary-join-based subgraph-growing algorithms" (\eaat), ``Worst-case optimal vertex-growing algorithms" (\vaat) and ``Shares of Hypercube" (\multiway). We also include \other category for algorithms that do not clearly belong to the above categories. 

\stitle{\eaat.} The strategy computes subgraph matching via a series of binary joins. It first decomposes the original query graph into a set of \textit{join units} whose matches can serve the base relation of the join. Then the base relations are joined based on a predefined \textit{join order}. The \eaat algorithms differ in the selections of join unit and join order. Typical choices of join unit are star (a tree of depth 1) in \starjoin \cite{Sun2012}, \ttwig (an edge or intersection of two edges) in \ttjoin \cite{Lai2015}, and clique (a graph whose vertices are mutually connected) in \cliquejoin \cite{Lai2016}. Most existing algorithms adopt the easier-solving left-deep join order \cite{Ioannidis1991} except \cliquejoin, which explores the optimality-guaranteed bushy join \cite{Ioannidis1991}.  
 
\stitle{\vaat.} Given $\{v_0, v_1, \cdots, v_n\}$ as the query vertices, \vaat strategy first computes all matches of $\{v_0\}$ that can present in the results, then matches of $\{v_0, v_1\}$, and so forth until constructing the results. Ngo et al. proposed the worst-case optimal join algorithm \genericjoin \cite{Ngo2018}, based on which Ammar et al. implemented \bigjoin in \timely dataflow system \cite{Murray2013} and showed its worst-case optimality \cite{Ammar2018}. In this paper, we also find out that the \eaat algorithm \cliquejoin (with ``overlapped decomposition'' \footnote{Decompose the query graph into join units that are allowed to overlap edges}) is also a variant of \code{GenericJoin}, and is hence worst-case optimal. 

\stitle{\multiway.} \multiway strategy treats the computation of the query with $n$ vertices as an $n$-dimensional \textit{hypercube}. It partitions the hypercube across $w$ workers in the cluster, and then each worker can compute its own share locally with no need of exchanging data. As a result, it typically renders much less communication cost than that of \eaat and \vaat algorithms. \multiwayjoin adopts the idea of \multiway for subgraph matching. In order to properly partition the computation without missing results, \multiwayjoin needs to duplicate each edge in multiple workers. As a result, \multiwayjoin can almost carry the whole graph in each worker for certain queries \cite{Lai2015, Ammar2018} and thus scale out poorly. 

\stitle{\other.} Shao et al. proposed \psgl \cite{Shao2014} that processes subgraph matching via breadth-first-style traversal. Staring from an initial query vertex, \psgl iteratively expands the partial results by merging the matches of certain vertex's unmatched neighbors. It has been pointed out in \cite{Lai2015} that \psgl is actually a variant of \starjoin. Very recently, Qiao et al. proposed \crystaljoin \cite{Qiao2017} that aims at resolving the ``output crisis'' by compressing the (intermediate) results. The idea is to first compute the matches of the vertex cover of the query graph, then the remaining vertices' matches can be compressed as intersection of the vertex cover's neighbors to avoid costly cartesian product.

\comment{
We summarize the algorithms, the categories and their relationships in \reffig{algorithms}.

\begin{figure}[htb]
  \centering
  \includegraphics[scale=0.6]{algorithms}
  \caption{\small{The algorithms to compare in this paper.}}
  \label{fig:algorithms}
\end{figure}
}

\stitle{Optimizations.} Apart from join strategies, existing algorithms also explored a variety of optimizations, some of which are query- or algorithm-specific, while we spotlight three general-purpose optimizations, \batching, \trindex and \compression. \batching aims to divide the whole computation into sub-tasks that can be evaluated independently in order to save resource (memory) allocation. \trindex precomputes and indices the triangles (3-cycles) of the graph to facilitate pruning. \compression attempts to maintain the (intermediate) results in a compressed form to reduce resource allocation and communication cost.

\comment{
\stitle{Batching.} Subgraph matching, especially unlabelled, can produce massive (intermediate) results, and cause huge memory burden. \batching is a way to relieve the burden via divide-and-conquer. The actual implementations may vary, but the simple idea is: one picks up the initial query vertex and divides its candidate matched data vertices (in unlabelled matching, it is the whole data vertex set) into $b$ shares; then the computation is divided into $b$ independent parts, each of which matches the initial query vertex to one share of data vertices. By expectation, we can reduce the resource allocation to $\frac{1}{b}$ while applying \batching.

\stitle{TrIndexing.} This optimization precomputes and indices the triangles (3-cycle) of the graph to facilitate candidate pruning. In addition to the default hash partition, Lai et al. proposed ``triangle partition'' to also incorporate the edges among the neighbors (it forms triangles with the anchor vertex) in the partition. ``Triangle partition'' allows \cliquejoin to use \clique as the join unit, which greatly reduces the intermediate results of certain queries and improves the performance. As will be shown very soon, ``triangle partition'' in de facto is a storage-efficient variant of \trindex. In the experiment of \cite{Ammar2018}, the authors showed that \bigjoin can also utilize \trindex to improve the performance, while the authors only demonstrated this optimization from the ``4-clique'' query.

\stitle{Compression.} Subgraph matching is a typical combinatorial problem, and can easily produce results of exponential size. \compression aims to maintain the (intermediate) results in a compressed form to reduce resource allocation and communication cost. Lai et al. proposed ``clique compression'' for \cliquejoin, observing that a large clique in the data graph naturally contains many small cliques that can match join unit. In \cite{Qiao2017}, Qiao et al. proposed \crystaljoin to study \compression in subgraph matching in general. The idea is to first compute the matches of the vertex cover of the query graph, then the remaining vertices' matches can be compressed as intersection of the vertex cover's neighbors to avoid costly cartesian product. 
}

\subsection{Motivations.} In this paper, we survey seven representative algorithms to solve distributed subgraph matching: \starjoin \cite{Sun2012}, \multiwayjoin \cite{AfratiFU13}, \psgl \cite{Shao2014}, \ttjoin \cite{Lai2015}, \cliquejoin \cite{Lai2016}, \crystaljoin \cite{Qiao2017} and \bigjoin \cite{Ammar2018}. While all these algorithms embody some good merits in theory, existing \textbf{algorithm-level} comparisons failed to provide a systematic view to the pros and cons of each algorithm due to several reasons. Firstly, the prior experiments did not take into consideration the differences of languages and the cost of the systems on which each implementation is based (\reftable{algorithms}). Secondly, some implementations hardcode query-specific optimizations for each query, which makes it hard to judge whether the observed performance is from the algorithmic advancement or hardcoded optimization. Thirdly, all \eaat and \vaat algorithms (more precisely, their implementations) intertwined join strategy with some optimizations of \batching, \trindex and \compression. We show in \reftable{algorithms} how each optimization has been applied in current implementation. For example, \cliquejoin only adopted \trindex and some query-specific \compression, while \bigjoin considered \batching in general, but \trindex only for one specific query (\compression was only discussed in paper, but not implemented). People naturally wonder that \textit{``maybe it is better to adopt A strategy with B optimization''}, but unfortunately none of existing implementation covers that combination. Last but not least, there misses an important benchmarking of the \rep strategy, that is to maintain the whole graph in each partition and parallelize embarrassingly \cite{Herlihy2008}. \rep strategy requires no communication, and it should be the most efficient strategy when each machine can hold the whole graph (the case for most experimental settings nowadays).

\begin{table*}[htb]
\small
  \centering
  \caption{Summarization of the surveyed algorithms.}
   \label{tab:algorithms}
   \begin{tabular}{|c|c|c|c|c|}
   \hline
   Algorithm & Category & Worst-case Optimality & Platform & Optimizations  \\
   \hline\hline
   \starjoin \cite{Sun2012} & \eaat & No & Trinity \cite{Shao2013} & None \\
   \hline
   \multiwayjoin \cite{AfratiFU13} & \multiway & N/A & Hadoop \cite{Lai2015}, Myria \cite{Chu2015}  & N/A \\
   \hline
   \psgl \cite{Shao2014} & \other & No & Giraph \cite{giraph}  & None \\
   \hline
   \ttjoin \cite{Lai2015} & \eaat & No & Hadoop & \compression \cite{Lai2017} \\
   \hline
   \cliquejoin \cite{Lai2016} & \eaat & Yes (\refsec{discussions}) & Hadoop & \trindex, some \compression \\
   \hline
   \crystaljoin \cite{Qiao2017} &\other & N/A & Hadoop & \trindex, \compression \\
   \hline
   \bigjoin \cite{Ammar2018} & \vaat & Yes \cite{Ammar2018} & Timely Dataflow \cite{Murray2013} & \batching, specific \trindex \\
   \hline
   \end{tabular}
\end{table*}

\reftable{algorithms} summarizes the surveyed algorithms via the category of strategy, the optimality guarantee, and the status of current implementations including the based platform and how the three optimizations are adopted. 

\subsection{Our Contributions}
To address the above issues, we aims at a systematic, \textbf{strategy-level} benchmarking of distributed subgraph matching in this paper. To achieve that goal, we implement all strategies, together with the three general-purpose optimizations for subgraph matching based on the \timely dataflow system \cite{Murray2013}. Note that our implementation covers all seven representative algorithms. Here, we use \timely as the base system as it incurs less cost \cite{McSherry2015} than other popular systems like Giraph \cite{giraph}, Spark \cite{Zaharia2010} and GraphLab \cite{Low2012}, so that the system's impact can be reduced to the minimum.

We implement the benchmarking platform using our best effort based on the papers of each algorithm and email communications with the authors. Our implementation is  (1) \textbf{generic} to handle arbitrary query, and does not include any hardcoded optimizations; (2) \textbf{flexible} that can configure \batching, \trindex and \compression optimizations in any combination for \eaat and \vaat algorithms; and (3) \textbf{efficient} that are comparable to and sometimes even faster than the original hardcoded implementation. Note that the three general-purpose optimizations are mainly used to reduce communication cost, and is not useful to the \multiway and \rep strategies, while we still devote a lot of efforts into their implementations. Aware that their performance heavily depends on the local algorithm, we implement and compare the state-of-the-art local subgraph matching algorithms proposed in \cite{Kim2016}, \cite{Aberger2016} (for unlabelled matching), and \cite{Bi2016} (for labelled matching), and adopt the best-possible implementation. For \multiway, we refer to \cite{Chu2015} to implement ``Hypercube Optimization'' for better hypercube sharing.

\comment{
We notice that all existing algorithms are on the track of optimizing communication cost, either by developing algorithm of certain optimality, or by compressing the intermediate results, or even by duplicating the data to avoid communication. Obviously the explosive nature of the (intermediate) results in distributed subgraph matching can incur huge communication cost, but subgraph matching itself is traditionally a computation-intensive task \cite{Ullmann1976}. One is immediately intrigued by the following question: 

\vspace{0.3cm}
\begin{minipage}{0.87\linewidth}
\textit{Is distributed subgraph matching more a communication-intensive or computation-intensive task nowadays?}
\end{minipage}
\vspace{0.3cm}

It is important to know the answer because it helps direct future researches into the more critical part. In this paper, we will find it out via empirical studies.
}

We make the following contributions in the paper.

\stitle {(1) A benchmarking platform based on \timely dataflow system for distributed subgraph matching.} We implement four distributed subgraph matching strategies (and the general optimizations) that covers seven state-of-the-art algorithms: \starjoin \cite{Sun2012},  \multiwayjoin \cite{AfratiFU13}, \psgl \cite{Shao2014}, \ttjoin \cite{Lai2015}, \cliquejoin \cite{Lai2016}, \crystaljoin \cite{Qiao2017} and \optjoin \cite{Ammar2018}. Our implementation is generic to handle arbitrary query, including the labelled and directed query, and thus can guide practical use.

\stitle{(2) Three general-purpose optimizations - \batching, \trindex and \compression.} We investigate the literature on the optimization strategies, and spotlight the three general-purpose optimizations. We propose heuristics to incorporate the three optimizations into \eaat and \vaat strategies, with no need of query-specific adjustments from human experts. The three optimizations can be flexibly configured in any combination.

\stitle{(3) In-depth experimental studies.} In order to extensively evaluate the performance of each strategy and the effectiveness of the optimizations, we use data graphs of different sizes and densities, including sparse road network, dense ego network, and web-scale graph that is larger than each machine's configured memory. We select query graphs of various characteristics that are either from existing works or suitable for benchmarking purpose. In addition to running time, we measure the communication cost, memory usage and other metrics to help reason the performance. 

\stitle{(4) A practical guide of distributed subgraph matching.} Through empirical analysis covering the variances of join strategies, optimizations, join plans, we propose a practical guide for distributed subgraph matching. We also inspire interesting future work based on the experimental findings. 

\subsection{Organizations}. The rest of the paper is organized as follows. \refsec{backgrounds} defines the problem of subgraph matching and introduces preliminary knowledge. \refsec{algorithms} surveys the representative algorithms, and our implementation details following the categories of \eaat, \vaat, \multiway and \other. \refsec{opt} investigates the three general-purpose optimizations and devises heuristics of applying them to \eaat and \vaat algorithms. \refsec{experiments} demonstrates the experimental results and our in-depth analysis. \refsec{related_works} discusses the related works, and \refsec{conclusion} concludes the whole paper.

\section{Preliminaries}
\label{sec:backgrounds}

\subsection{Problem Definition}
\label{sec:problem_definition}

\stitle{Graph Notations.} A graph $g$ is defined as a 3-tuple, $g = (V_g, E_g, L_g)$, where $V_g$ is the vertex set and $E_g \subseteq V_g \times V_g$ is the edge set of $g$, and $L_g$ is a label function that maps each vertex $\mu \in V_g$ and/or each edge $e \in E_g$ to a label. Note that for unlabelled graph, $L_g$ simply maps all vertices and edges to $\emptyset$. For a vertex $\mu \in V_g$, denote $\mathcal{N}_g(\mu)$ as the set of neighbors, $d_g(\mu) = |\mathcal{N}_g(\mu)|$ as the degree of $\mu$, $\overline{d}_g = \frac{2|E_g|}{|V_g|}$ and $D_g = \max_{\mu\in V(g)} d_g(\mu)$ as the average and maximum degree, respectively. A \textit{subgraph} $g'$ of $g$, denoted $g' \subseteq g$, is a graph that satisfies $V_{g'}\subseteq V_g$ and $E_{g'}\subseteq E_g$. 

Given $V' \subseteq V_g$, we define induced subgraph $g(V')$ as the subgraph induced by $V'$, that is $g(V') = (V', E(V'), L_g)$, where $E(V') = \{e = (\mu, \mu')\;|\; e \in E_g , \mu \in V' \land \mu' \in V'\}$. We say $V' \subseteq V_g$ is a vertex cover of $g$, if $\forall$ $e = (\mu, \mu') \in E_g$, $\mu \in V'$ or $\mu' \in V'$. A minimum vertex cover $V^c_g$ is a vertex cover of $g$ that contains minimum number of vertices. A connected vertex cover is a vertex cover whose induced subgraph is connected, among which a minimum connected vertex cover, denoted $V^{cc}_g$, is the one with the minimum number of vertices.

\stitle{Data and Query Graph.} We denote the data graph as $G$, and let $N = |V_G|$, $M = |E_G|$. Denote a data vertex of id $i$ as $u_i$ where $1 <= i <= N$. Note that the data vertex has been reordered such that if $d_G(u) < d_G(u')$, then $id(u) < id(u')$. We denote the query graph as $Q$, and let $n = |V_Q|$, $m = |E_Q|$, and $V_Q = \{v_1, v_2, \cdots, v_n\}$.

\comment{
In this work, we partition the data graph $G$ across $w$ workers in two ways:
\begin{enumerate}
\setlength\itemsep{-0.3em}
\item Hash partition, denoted $\Phi^\gamma(G)$. Given a random function $\gamma: V_G \rightarrow [1, w]$, we place $(u, \mathcal{N}_G(u))$ in the worker $\gamma(u)$. We use this strategy by default without otherwise specified.
\item Triangle partition, denoted $\Phi^\tau(G)$. On top of $\Phi^\gamma(G)$, $\forall u \in V_G$, we further include the edge $(u', u'')$ in the worker $\gamma(u)$, where $u', u'' \in \mathcal{N}_G(u), (u', u'') \in E_G$, and $id(u) < id(u') \land id(u) < id(u'')$. Obviously, $(u, u', u'')$ form a triangle in $G$ with $u$ as the smallest vertex.
\end{enumerate}
}

\stitle{Subgraph Matching.} Given a data graph $G$ and a query graph $Q$, we define \textit{subgraph isomorphism}:

\begin{definition}
	\label{def:isomorphism} \sstitle{(Subgraph Isomorphism.)} \textit{Subgraph isomorphism} is defined as a bijective mapping $f: V(Q) \rightarrow V(G)$ such that, (1) $\forall v \in V(Q)$, $L_Q(v) = L_{G}(f(v))$; (2) $\forall (v, v') \in E(Q)$, $(f(v), f(v')) \in E(G)$, and $L_Q((v, v')) = L_{G}((f(v), f(v')))$. A subgraph isomorphism is called a \textit{Match} in this paper. With the query vertices listed as $\{v_1, v_2, \cdots, v_n\}$, we can simply represent a match $f$ as $\{u_{k_1}, u_{k_2}, \cdots, u_{k_n}\}$, where $f(v_i) = u_{k_i}$ for $1 <= i <= n$.
	\end{definition}
	
	The \textit{Subgraph Matching} problem aims at finding all matches of $Q$ in $G$. Denote $R_G(Q)$ (or $R(Q)$ when the context is clear) as the result set of $Q$ in $G$. As prior works \cite{Lai2015, Lai2016, Shao2014}, we apply \textit{symmetry breaking} for unlabelled matching to avoid duplicate enumeration caused by automorphism.  Specifically, we first assign partial order $O_Q$ to the query graph according to \cite{Grochow2007}. Here, $O_Q \subseteq V_Q \times V_Q$, and $(v_i, v_j) \in O_Q$ means $v_i < v_j$. In unlabelled matching, a match $f$ must satisfy the \textit{order constraint}: $\forall (v, v') \in O_Q$, it holds $f(v) < f(v')$. Note that we do \textbf{not} consider order constraint in labelled matching. 

\begin{example}
\label{ex:subgraph_isomorphism}	
In \reffig{subgraph_isomorphism}, we present a query graph $Q$ and a data graph $G$. For unlabelled matching, we give the partial order $O_Q$ under the query graph. There are three matches: $\{u_1, u_2, u_6, u_5\}$, $\{u_2, u_5, u_3, u_6\}$ and $\{u_4, u_3, u_6, u_5\}$. It is easy to check that these matches satisfy the order constraint. Without the order constraint, there are actually four automorphic\footnote{Automorphism is an isomorphism from one graph to itself.} matches corresponding to each above match \cite{AfratiFU13}. For labelled matching, we use different fillings to represent the labels. There are two matches accordingly - $\{u_1, u_2, u_6, u_5\}$ and $\{u_4, u_3, u_6, u_5\}$.  
\end{example}

\begin{figure}[htb]
  \centering
  \includegraphics[scale=0.8]{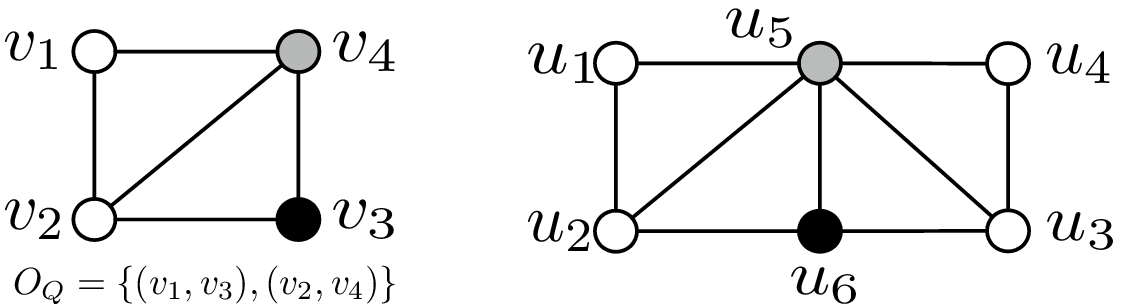}
  \caption{\small{Query Graph $Q$ (Left) and Data Graph $G$ (Right).}}
  \label{fig:subgraph_isomorphism}
\end{figure}

By treating the query vertices as attributes and data edges as relational table, we can write subgraph matching query as a multiway-way join of the edge relations. For example, regardless of label and order constraints, the query of \refex{subgraph_isomorphism} can be written as the following join

\begin{equation}
\label{eq:subgraph_isomorphism}
\small
\begin{aligned}
	R(Q) &= E(v_1, v_2) \Join E(v_2, v_3) \\
	&\Join E(v_3, v_4) \Join E(v_1, v_4) \Join E(v_2, v_4).
\end{aligned}
\end{equation}
 
This motivates researchers to leverage join operation for large-scale subgraph matching, given that join can be easily distributed, and it is natively supported in many distributed data engines like Spark \cite{Zaharia2010} and Flink \cite{Carbone2015}.

\comment{
\subsection{Massively Parallel Computation Model}
\label{sec:mpc}
Massively Parallel Computation Model (\mpc) \cite{Beame2017} defines a multi-round computation among $w$ workers in a cluster. Initially, the input data is distributed arbitrarily across the cluster. Within one round of computation, each worker can receive messages from all workers, use them for local computing, and produce the output data that will be routed to corresponding workers in the next round. \mpc model emphasizes communication cost due to its expensiveness in the distributed setting. As previous works \cite{Lai2015, Lai2016, Ammar2018}, we measure the cost by the number of data tuples the workers receive. Under the \mpc model, an algorithm is evaluated by three parameters: (1) $r$, the number of rounds; (2) $\C_{max}$, the maximum cost of any worker during the computation; (3) $\C_{tot}$, the total cost of all workers throughout the computation. 
}

\comment{
\subsection{Generic Join: Worst-case Optimality}
Ngo et al. gave a class of the worst-case optimal join algorithm called \code{GenericJoin} \cite{Ngo2018}. Let the join be $R(V) = \Join_{F \subseteq \Psi} R(F)$, where $\Psi = \{U\;|\;U \subseteq V\}$ and $V = \bigcup_{U\in \Psi} U$. Given a vertex subset $U \subseteq V$, let $\Psi_U = \{V'\;|\;V' \in \Psi \land V' \cap U \neq \emptyset\}$, and for a tuple $t \in R(V)$, denote $t_U$ as $t$'s projection on $U$. We then show the \code{GenericJoin} in \refalg{generic_join}.

\begin{algorithm}[htb]
\SetAlgoVlined
\SetFuncSty{textsf}
\SetArgSty{textsf}
\small
\caption{\code{GenericJoin}$(V, \Psi, \Join_{U \in \Psi}R(U))$}
\label{alg:generic_join}
\State{$R(V) \leftarrow \emptyset$;} \\
\If{$|V| = 1$} {
\State{\textbf{Return} $\bigcup_{U \in \Psi} R(U)$;}
}
\State{$V \leftarrow (I, J)$, where $\emptyset \neq I \subset V$, and $J = V \setminus I$;} \\
\State{$R(I) \leftarrow \code{GenericJoin}(I, \Psi_I, \Join_{U \in \Psi_I} \pi_I(R(U)))$;} \\
\ForAll{$t_I \in R(I)$} {
\State{$R(J)_{w.r.t.\;t_I} \leftarrow \code{GenericJoin}(J, \Psi_J, \Join_{U \in \Psi_J} \pi_J(R(U) \ltimes t_I))$; } \\
\State{$R(V) \leftarrow R(V) \cup \{t_I\} \times R(J)_{w.r.t.\;t_I}$;}
}
\State{\textbf{Return} $R(V)$;}
\end{algorithm}

In \refalg{generic_join}, the original join is recursively decomposed into two parts $R(I)$ and $R(J)$ regarding the disjoint sets $I$ and $J$.  From line~5, it is clear that $R(I)$ will record $R(V)$'s projection on $I$, thus we have $|R(I)| \leq |\overline{R}(V)|$, where $\overline{R}(V)$ is the maximum possible results of the query. Meanwhile, in line~7, the semi-join $R(U) \ltimes t_I = \{r\;|\; r \in R(U) \land r_{(U \cup I)} = t_{(U \cup I)}\}$ only retains those $R(J)$ w.r.t. $t_I$ that can end up in the join result, which infers that the  $R(J)$ must also be bounded by the final results. This intuitively explains the worst-case optimality of \code{GenericJoin}, while we refer interested readers to \cite{Ngo2014} for a complete proof. 
}
 
\subsection{Timely Dataflow System}
\label{sec:timely_dataflow}
\timely is a distributed data-parallel dataflow system \cite{Murray2013}. The minimum processing unit of \timely is a \textit{worker}, which can be simply seen as a process that occupies a CPU core. Typically, one physical multi-core machine can run several workers. \timely follows the \textit{shared-nothing dataflow} computation model \cite{DeWitt1990} that abstracts the computation as a dataflow graph. In the dataflow graph, the vertex (a.k.a. \textit{operator}) defines the computing logics and the edges in between the operators represent the data streams. One operator can accept multiple input streams, feed them to the computing, and produce (typically) one output stream.  
After the dataflow graph for certain computing task is defined, it is distributed to each worker in the cluster, and further translated into a physical execution plan. Based on the physical plan, each worker can accordingly process the task in parallel while accepting the corresponding input portion. 

\comment{
Traditional centralised distributed architecture relies on a master to coordinate the computation, while \timely follows a decentralised design that drives the computation via the availability of data rather than centralised control. To achieve that purpose, \timely let each data bear a logical \textit{timestamp}, which can be viewed as an array of integers, and is partially ordered. In \timely, there are two system calls:
\begin{itemize}
	\item $\notifyat(t)$: To notify the operator about an \textit{event} at time $t$.
	\item $\sendby((O_{up}, O), d, t)$: The upstream operator $O_{up}$ sends the data $d$ to this operator at time $t$.
\end{itemize}
Correspondingly, there are two user-defined callback functions:
\begin{itemize}
	\item $\onnotify(t)$: The function will be queued on the operator for further call in response to $\notifyat(t)$.
	\item $\onrecv((O_{up}, O), d, t)$: The function will be queued on the operator while receiving data $d$ from upstream operator at $t$.
\end{itemize}
The queued $\onnotify$ and $\onrecv$ functions are invoked in such an order that $\onnotify(t)$ is invoked only after \textbf{no} further calls of $\onrecv((O_{up}, O), d, t')$ with $t' \leq t$. This constraint guarantees the correctness of a \timely dataflow, in the sense that once $\onnotify(t)$ is invoked, all data with smaller timestamps must be available to process. 

To explain the process, we implement a \join operator on \timely that joins two data streams regarding equal timestamp. Let the two input sources be $\joinin_0$ and $\joinin_1$, and each input data $d$ can be split into key-value pair as $(d.k, d.v)$.

\begin{algorithm}[htb]
\SetAlgoVlined
\SetFuncSty{textsf}
\SetArgSty{textsf}
\small
\caption{$O_{\join}(\joinin_0, \joinin_1, O_{sink})$}
\label{alg:join_timely}
\State{$\mathcal{H}_0 \leftarrow \emptyset$} \\
\State{$\mathcal{H}_1 \leftarrow \emptyset$} \\

\vspace*{0.1cm}
\State{\textbf{void} $\onrecv((\joinin_0, O_{\join}), d, t)$} \\
\quad\quad \If {$\mathcal{H}_0[t] = \emptyset$} {
\label{join_timely_notifyat0}
\quad\quad\State{$\notifyat(t)$} \\
\quad\quad\State{$\mathcal{H}_0[t] \leftarrow \emptyset$}
}
\quad\quad\State{$\mathcal{H}_0[t][d.k] \leftarrow \mathcal{H}_0[t][d.k] \cup \{d.v\}$
}

\vspace*{0.1cm}
\State{\textbf{void} $\onrecv((\joinin_1, O_{\join}), d, t)$} \\
\quad\quad \If {$\mathcal{H}_1[t] = \emptyset$} {
\label{join_timely_notifyat1}
\quad\quad\State{$\notifyat(t)$} \\
\quad\quad\State{$\mathcal{H}_1[t] \leftarrow \emptyset$}
}
\quad\quad\State{$\mathcal{H}_1[t][d.k] \leftarrow \mathcal{H}_1[t][d.k] \cup \{d.v\}$}

\vspace*{0.1cm}
\State{\textbf{void} $\onnotify(t)$} \\
\quad\quad\ForAll{$t' \in \mathcal{H} \land t' \leq t$} {
\quad\quad\State{$D_0 \leftarrow \mathcal{H}_0[t']$} \\
\quad\quad\State{$D_1 \leftarrow \mathcal{H}_1[t']$} \\
\quad\quad\State{Remove entry $t'$ from $\mathcal{H}_0$ and $\mathcal{H}_1$} \\
\quad\quad\State{$\sendby((O_{\join}, O_{sink}), D_0 \Join D_1, t')$} \label{join_timely_sendby} \\
}
\end{algorithm}

We use two hash tables in the join operator to cache the data for further process. Once a data of both inputs carrying a new (and always larger) timestamp is arriving, we notify the operators in line~\ref{join_timely_notifyat0} and line~\ref{join_timely_notifyat1} to queue the $\onnotify$ call at the timestamp. Upon invoking $\onnotify(t)$, we know that every data with $t' \leq t$ is ready to go due to the guarantee of \timely dataflow system. Thus we iterate these timestamps in both hash tables, join the corresponding data and output the results in line~\ref{join_timely_sendby}.
}

\comment{
\stitle{Discussion.} Suppose there are two machines $A$ and $B$ in the cluster, among which $A$ is a quicker executor. The inputs of the join are coming from external data source at a constant pace. While implementing the above join scenario in a centralised architecture, a master machine will coordinate the computation by placing a barrier for synchronisation between two consecutive join steps (timestamps). Thus, in addition to a synchronisation cost, $A$ must always wait for the slower $B$ to move to the next step. In contrast, \timely is more flexible in controlling the computation, as revealed in \refalg{join_timely}. When $A$ completes its share of computation, it can continue to invoke $\onrecv$ for the next timestamp, with no need (and indeed impossible) to wait for $B$. If a barrier is required, one can manually implement a barrier operator\footnote{\url{https://github.com/frankmcsherry/timely-dataflow/blob/master/examples/barrier.rs}} that caches the join's output data until all data of the same timestamp is produced. 
}

\section{Algorithm Survey}
\label{sec:algorithms}
We survey the distributed subgraph matching algorithms following the categories of \eaat, \vaat, \multiway, and \other. We also show that \cliquejoin is a variant of \genericjoin \cite{Ngo2018}, and is thus worst-case optimal.

\subsection{BinJoin}
\label{sec:edge-at-a-time}

The simplest \eaat algorithm uses data edges as the base relation, which starts from one edge, and expands by one edge in each join. For example, to solve the join of \refeq{subgraph_isomorphism}, a simple plan is shown in \reffig{simple_join_a}. The join plan is straightforward, but the intermediate results, especially $R_2$ (a 3-path), can be huge. 
\begin{figure}[htb]
    \centering
    \begin{subfigure}[b]{0.5\columnwidth}
        \centering
        \includegraphics[width=0.8\textwidth]{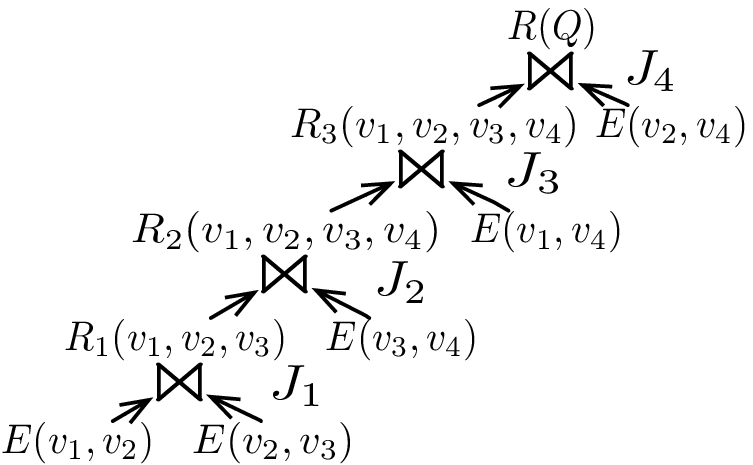}
        \caption{Left-deep join plan}
        \label{fig:simple_join_a}
    \end{subfigure}%
    \begin{subfigure}[b]{0.5\columnwidth}
        \centering
        \includegraphics[width=0.8\textwidth]{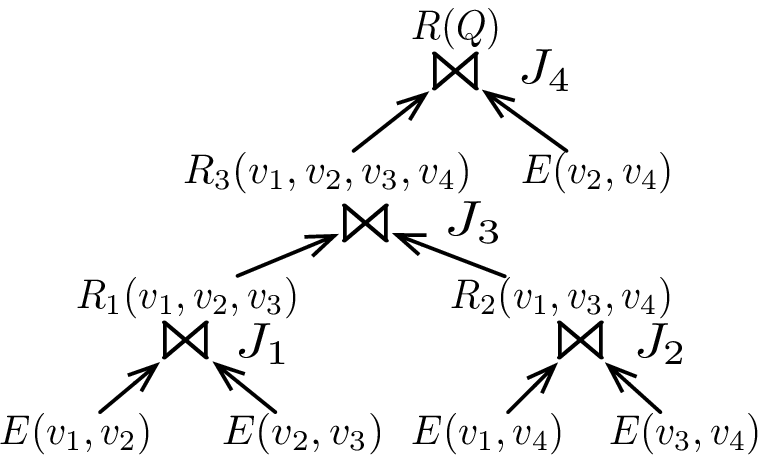}
        \caption{Bushy join plan}
        \label{fig:simple_join_b}
    \end{subfigure}
    \caption{Simple \eaat Join Plans.}
    \label{fig:simple_join}
\end{figure}

To improve the performance of \eaat, people devoted their efforts into: (1) using more complex base relations other than edge; (2) devising better join plan $P$. The base relations $B_{[q]}$ represent the matches of a set of sub-structures $[q]$ of the query graph $Q$. Each $p \in [q]$ is called a join unit, and it must satisfy $V_Q = \bigcup_{p\in[q]} V_p$ and $E_Q = \bigcup_{p\in[q]} E_p$. With the data graph partitioned across the cluster, \cite{Lai2016} constrains the join unit to be the structure whose results can be independently computed within each partition (i.e. embarrassingly parallel \cite{Herlihy2008}). It is not hard to see that when each vertex has full access to the neighbors in the partition, we can compute the matches of a $k$-star (a star of $k$ leaves) rooted on the vertex $u$ by enumerating all $k$-combinations within $\mathcal{N}_G(u)$. Therefore, star is a qualified and indeed widely used join unit.

Given the base relations, the join plan $P$ determines an order of processing binary joins. A join plan is \textit{left-deep}\footnote{More precisely it is deep, and can further be left-deep and right-deep. In this paper, we assume that it is left-deep following the prior work \cite{Lai2015}.} if there is at least a base relation involved in each join, otherwise it is \textit{bushy}. For example, the join plan in \reffig{simple_join_a} is left-deep, and a bushy join plan is shown in \reffig{simple_join_b}. Note that the bushy plan avoids the expensive $R_2$ in the left-deep plan, and is generally better. 

\stitle{StarJoin.} As the name suggests, \starjoin uses star as the join unit, and it follows the left-deep join order. 
To decompose the query graph, it first locates the vertex cover of the query graph, and each vertex in the cover and its unused neighbors naturally form a star \cite{Sun2012}. A \starjoin plan for \refeq{subgraph_isomorphism} is
\begin{equation*}
\small
(J_1)\;\;R(Q) = \Star(v_2; \{v_1, v_3, v_4\}) \Join \Star(v_4; \{v_2, v_3\}),
\end{equation*}
where $\Star(r; L)$ denotes a \Star relation (the matches of the star) with $r$ as the root, and $L$ as the set of leaves.

\stitle{TwinTwigJoin.} Enumerating a $k$-star on a vertex of degree $d$ will render $O(d^k)$ cost. We refer \textit{star explosion} to the case while enumerating stars on a large-degree vertex. Lai et al. proposed \ttjoin \cite{Lai2015} to address the issue of \starjoin by forcing the join plan to use \ttwig (a star of at most two edges) instead of a general star as the join unit. Intuitively, this would help ameliorate the star explosion by constraining the cost of each join unit from $d^k$ of arbitrary $k$ to at most $d^2$. \ttjoin follows \starjoin to use left-deep join order. 
The authors proved that \ttjoin is instance optimal to \starjoin, that is given any general \starjoin plan in the left-deep join order, we can rewrite it as an alternative \ttjoin plan that draws no more cost (in the big $O$ sense) than the original \starjoin, where the cost is evaluated based on Erd\"os-R\'enyi random graph (\er) model \cite{Erdos60onthe}.
A \ttjoin plan for \refeq{subgraph_isomorphism} is
\begin{equation}
\label{eq:ttjoin}
\small
\begin{aligned}
(J_1)\;\; & R_1(v_1, v_2, v_3, v_4) = \\
& \ttwig(v_1; \{v_2, v_4\}) \Join \ttwig(v_2; \{v_3, v_4\}); \\
(J_2)\;\; & R(Q) = R_1(v_1, v_2, v_3, v_4) \Join \ttwig(v3; \{v4\}),
\end{aligned}
\end{equation}
where $\ttwig(r; L)$ denotes a \ttwig relation with $r$ as the root, and $L$ as the leaves.

\stitle{CliqueJoin.} \ttjoin hampers star explosion to some extent, but still suffers from the problems of long execution ($\Omega(\frac{m}{2})$ rounds) and suboptimal left-deep join plan. \cliquejoin resolves the issues by extending \starjoin in two aspects. Firstly, \cliquejoin applies the ``triangle partition'' strategy (\refsec{trindex}), which enables \cliquejoin to use clique, in addition to star, as the join unit. The use of clique can greatly shorten the execution especially when the query is dense, although it still degenerates to \starjoin when the query contains no clique subgraph. Secondly, \cliquejoin exploits the bushy join plan to approach optimality. 
A \cliquejoin plan for \refeq{subgraph_isomorphism} is:
\begin{equation}
\label{eq:clique_join}
(J_1)\;\;R(Q) = \clique(\{v_1, v_2, v_4\}) \Join \clique(\{v_2, v_3, v_4\}),	
\end{equation}
where $\clique(V)$ denotes a \clique relation of the involving vertices $V$. 

\stitle{Implementation Details.} We implement the \eaat strategy based on the join framework proposed in \cite{Lai2016} to cover \starjoin, \ttjoin and \cliquejoin.

We use power-law random graph (\power) model \cite{Chung03} to estimate the cost as \cite{Lai2016}, and implement the dynamic programming algorithm \cite{Lai2016} to compute the cost-optimal join plan. 
Once the join plan is computed, we translate the plan into \timely dataflow that processes each binary join using a \code{Join} operator. We implement the \code{Join} operator following \timely's official ``pipeline'' \code{HashJoin} example\footnote{\url{https://github.com/TimelyDataflow/timely-dataflow/blob/master/examples/hashjoin.rs}}. We modify it into ``batching-style'' - the mappers (senders) shuffle the data based on the join key, while the reducers (receivers) maintain the received key-value pairs in a hash table (until mapper completes) for join processing. The reasons that we implement the join as ``batching-style'' are, (1) its performance is similar to ``pipeline'' join as a whole; (2) it replays the original implementation in Hadoop; and (3) it favors the \batching optimization (\refsec{batching}).

\comment{
Directly implementing the join will cause huge memory burden. We follow the idea of external \code{MergeSort} to address the issue. Specifically, we inject a \code{Buffer-and-Batch} operator for the two data streams before they arrive at the \code{Join} operator. \code{Buffer-and-Batch} functions in two parts:
\begin{itemize}
    \setlength\itemsep{-0.3em}
	\item \code{Buffer}: While the operator receives data from the upstream, it buffers the data until reaching a given threshold. Then the buffer is sorted according to the join key's hash value and spilled to the disk. The buffer is reused for the next batch of data.
	\item \code{Batch}:  After the data to join is fully received, we read back the data from the disk in a batching manner, where each batch must include all join keys whose hash values are within a certain range.
\end{itemize}
While one batch of data is delivered to the \code{Join} operator, we can leverage \code{Timely} to supervise the progress and hold the next batch until the current batch completes. This way, the internal memory requirement is one batch of the data. We notice that \ttjoin and \cliquejoin were originally implemented in Hadoop, and the external hash join is already implied in Hadoop's ``Shuffle'' stage.


\comment{
\begin{example}
Referring \refeq{clique_join}, let
\begin{equation*}
\begin{split}
\clique(\{(v_1, v_4), v_2\}): \{((1, 2), 3), ((2, 3), 4), ((3, 4), 5)\}, \\
\clique(\{(v_2, v_4), v_1\}): \{((1, 2), 4), ((2, 3), 5), ((3, 4), 6)\}.
\end{split}
\end{equation*}
Note that we isolate the join keys with extra bracket in the relations for better explanation. Suppose the batch size $b = 2$, the number of vertices $n = 6$, and the join key's hash value is simply the first vertex's id. The join will be split into $\ulcorner{n/b}\urcorner = 3$ batches: in the first batch, we join the tuples whose join key's hash value is within $[1, 2]$, i.e. $\{((1, 2), 3), ((2, 3), 4)\}$ and $\{((1, 2), 4), ((2, 3), 5)\}$; in the second batch, we join those within $[3, 4]$, i.e. $\{((3, 4), 5)\}$ and $\{((3, 4), 6)\}$; in the third batch, we join those within $[5, 6]$, and there is no tuple in that range. 
\end{example}
}

\begin{remark}
In the following, we tend to use \eaat algorithms instead of explicitly specifying \starjoin, \ttjoin and \cliquejoin. We further notice that when bushy join plan is adopted, the only boundary between them is that \cliquejoin is the \eaat algorithm with \trindex, while \starjoin and \ttjoin are the ones without \trindex (\refsec{trindex}). 
\end{remark}
}

\subsection{WOptJoin}
\label{sec:vertex-at-a-time}
\vaat strategy processes subgraph matching by matching vertices in a predefined order. Given the query graph $Q$ and $V_Q = \{v_1, v_2, \cdots, v_n\}$ as the matching order, the algorithm starts from an empty set, and computes the matches of the subset $\{v_1, \cdots, v_i\}$ in the $i^{th}$ rounds. Denote the partial results after the $i^{th}$ ($i < n$) round as $R_i$, and $p = \{u_{k_1}, u_{k_2}, \cdots, u_{k_i}\} \in R_i$ is one of the tuples. In the $i + 1^{th}$ round, the algorithm expands the results by matching $v_{i + 1}$ with $u_{k_{i + 1}}$ for $p$ iff. $\forall_{1 \leq j \leq i} (v_j, v_{i + 1}) \in E_Q$, $(u_{k_j}, u_{k_{i + 1}}) \in E_G$. It is immediate that the candidate matches of $v_{i+1}$, denoted $C(v_{i+1})$, can be obtained by intersecting the relevant neighbors of the matched vertices as
\begin{equation}
\label{eq:intersection}
C(v_{i + 1}) = \bigcap_{\forall_{1 \leq j \leq i} \land (v_j, v_{i + 1}) \in E_Q} \mathcal{N}_G(u_{k_j}). 
\end{equation}
 
\stitle{BiGJoin.} \optjoin adopts the \vaat strategy in \timely dataflow system. The main challenge is to implement the intersection efficiently using \timely dataflow. For that purpose, the authors designed the following three operators: 
\begin{itemize}
    \setlength\itemsep{-0.3em}
	\item \code{Count}: Checking the number of neighbors of each $u_{k_j}$ in \refeq{intersection} and recording the location (worker) of the one with the smallest neighbor set.
	\item \code{Propose}: Attaching the smallest neighbor set to $p$ as $(p; C(v_{i + 1}))$.
	\item \code{Intersect}: Sending $(p; C(v_{i+1}))$ to the worker that maintains each $u_{k_j}$ and update $C(v_{i+1}) = C(v_{i+1}) \cap \mathcal{N}_G(u_{k_j})$.
\end{itemize}
After intersection, we will expand $p$ by pushing into $p$ every vertex of $C(v_{i+1})$. 

\comment{
Note that the partial results $R_i$ can be huge already, yet in the next round, \optjoin still needs to \code{Propose} on each $p \in R_i$ to get the candidate sets for the next query vertex. The memory burden piles up as the progress of the algorithm. To resolve this issue, \optjoin applies the following batching strategy: it first divides a certain $R_i$ evenly into $b$ parts, and then evaluates each part using a sub-dataflow. These sub-dataflows are physically independent, and can be scheduled to run in a batching manner. Technically speaking, we can apply such batching to any $R_i$, and even fork sub-batches. For simplicity, the authors implemented \optjoin that only batches upon $R_2$, i.e. the edges of the graph. 
}

\stitle{Implementation Details.} We directly use the authors' implementation \cite{timely-dataflow}, but slightly modify the codes to use the common graph data structure. We do not consider the dynamic version of \optjoin in this paper, as the other strategies currently only support static context. The matching order is determined using a greedy heuristic that starts with the vertex of the largest degree, and consequently selects the next vertex with the most connections (id as tie breaker) with already-selected vertices. 

\subsection{ShrCube}
\label{sec:multiway_join}
\multiway strategy treats the join processing of the query $Q$ as a hypercube of $n = |V_Q|$ dimension. It attempts to divide the hypercube evenly across the workers in the cluster, so that each worker can complete its own share without data communication. However, it is normally required that each data tuple is duplicated into multiple workers. This renders a space requirement of $\frac{M}{w^{1-\rho}}$ for each worker, where $M$ is size of the input data, $w$ is the number of workers and $0 < \rho \leq 1$ is a query-dependent parameter. When $\rho$ is close to $1$, the algorithm ends up with maintaining the whole input data in each worker. 

\stitle{MultiwayJoin.} \multiwayjoin applies the \multiway strategy to solve subgraph matching in one single round. Consider $w$ workers in the cluster, a query graph $Q$ with $V_Q = \{v_1, v_2, \ldots, v_n\}$ vertices and $E_Q = \{e_1, e_2, \ldots, e_m\}$, where $e_i = (v_{i_1}, v_{i_2})$. Regarding each query vertex $v_i$, assign a positive integer as bucket number $b_i$ that satisfies $\prod_{i=1}^n b_i = w$. The algorithm then divides the candidate data vertices for $v_i$ evenly into $b_i$ parts via a hash function $h: u \mapsto z_i$, where $u\in V_G, 1\leq z_i\leq b_i$. This accordingly divides the whole computation into $w$ shares, each of which can be indiced via an $n$-ary tuple $(z_1, z_2, \cdots, z_n)$, and is assigned to one worker. Afterwards, regarding each query edge $e_i = (v_{i_1}, v_{i_2})$, \multiwayjoin maps a data edge $(u, u')$ as $(z_1, \cdots, z_{i_1} = h(u), \cdots, z_{i_2} = h(u'), \ldots, z_n)$, where other than $z_{i_1}$ and $z_{i_2}$, each above $z_i$ iterates through $\{1,2,\cdots,b_i\}$, and the edge will be routed to the workers accordingly. Taking triangle query with $E_Q = \{(v_1, v_2), (v_1, v_3), (v_2, v_3)\}$ as an example. According to \cite{AfratiFU13}, $b_1 = b_2 = b_3 = b = \sqrt[n]{w}$ is an optimal bucket number assignment. Each edge $(u, u')$ is then routed to the workers as: (1) $(h(u), h(u'), z)$ regarding $(v_1, v_2)$; (2) $(h(u), z, h(u'))$ regarding $(v_1, v_3)$; (3) $(z, h(u), h(u'))$ regarding $(v_2, v_3)$, where the above $z$ iterates through $\{1,2,\cdots,b\}$. Consequently, each data edge is duplicated by roughly $3\sqrt[3]{w}$ times, and by expectation each worker will receive $\frac{3M}{w^{1 - 1/3}}$ edges. For unlabelled matching, \multiwayjoin utilizes the partial order of the query graph (\refsec{problem_definition}) to reduce edge duplication, and details can be found in \cite{AfratiFU13}. 

\stitle{Implementation Details.} There are two main impact factors of the performance of \multiway. Firstly, the hypercube sharing by assigning proper $b_i$ for $v_i$. Beame et al. \cite{Beame2017} generalized the problem of computing optimal hypercube sharing for arbitrary query as linear programming. However, the optimal solution may assign fractional bucket number that is unwanted in practice. An easy refinement is to round down to an integer, but it will apparently result in idle workers. Chu et al. \cite{Chu2015} addressed this issue via ``Hypercube Optimization'', that is to enumerate all possible bucket sequences around the optimal solutions, and choose the one that produces shares (product of bucket numbers) closest to the number of workers. We adopt this strategy in our implementation.

Secondly, the local algorithm. When the edges arrive at the worker, we collect them into a local graph (duplicate edges are removed), and use local algorithm to compute the matches. For unlabelled matching. we study the state-of-the-art local algorithms from ``EmptyHeaded'' \cite{Aberger2016} and ``DualSim''\cite{Kim2016}. ``EmptyHeaded'' is inspired by Ngo's worst-case optimal algorithm \cite{Ngo2018} that decomposes the query graph via ``Hyper-Tree Decomposition'', computes each decomposed part using worst-case optimal join and finally glues all parts together using hash join. ``DualSim'' was proposed by \cite{Kim2016} for subgraph matching in the external-memory setting. The idea is to first compute the matches of $V_Q^{cc}$, then the remaining vertices $V_Q \setminus V_Q^{cc}$ can be efficiently matched by enumerating the intersection of $V_Q^{cc}$'s neighbors. We find out that ``DualSim'' actually produces the same query plans as ``EmptyHeaded'' for all our benchmarking queries (\reffig{queries}) except $q_9$. We implement both algorithms for $q_9$ and ``DualSim'' performs better than ``EmptyHeaded'' on the GO, US, GP and LJ datasets (\reftable{datasets}). 
As a result, we adopt ``DualSim'' as the local algorithm for \multiwayjoin. For labelled matching, we implement ``CFLMatch'' proposed in \cite{Bi2016} that has been shown so far to have the best performance.  

Now we let each worker independently compute matches in its local graph. Simply doing so will result in duplicates, so we process deduplication as follows: given a match $f$ that is computed in the worker identified by $t_w$, we can recover the tuple $t^f_{e}$ of the matched edge $(f(v), f(v'))$ regarding the query edge $e = (v, v')$, then the match $f$ is retained if and only if $t_w = t^f_{e}$ for every $e \in E_Q$. To explain this, let's consider $b = 2$, and a match $\{u_0, u_1, u_2\}$ for a triangle query $(v_0, v_1, v_2)$, where $h(u_0) = h(u_1) = h(u_2) = 0$. It is easy to see that the match will be computed in workers of $(0, 0, 0)$ and $(0, 0, 1)$, while the match in worker $(0, 0, 1)$ will be eliminated as $(u_0, u_2)$ that matches the query edge $(v_0, v_2)$ can not be hashed to $(0, 0, 1)$ regarding $(v_0, v_2)$. We can also avoid deduplication by separately maintaing each edge regarding different query edges it stands for, and use the local algorithm proposed in \cite{Chu2015}, but it results in too many edge duplicates that drain our memory even when processing a medium-size graph.

 \subsection{Others}
 \label{sec:others}
 \stitle{PSgL and its implementation.} \psgl iteratively processes subgraph matching via breadth-first traversal. All query vertices are configured three status, ``white'' (initialized), ``gray'' (candidate) and ``black'' (matched). Denote $v_i$ as the vertex to match in the $i^{th}$ round. The algorithm starts from matching initial query vertex $v_1$, and coloring the neighbors as ``gray''. In the $i^{th}$ round, the algorithm applies the workload-aware expanding strategy at runtime, that is to select the $v_i$ to expand among all current ``gray'' vertices based on a greedy heuristic to minimize the communication cost \cite{Shao2013}; the partial results from previous round $R_{i-1}$ (specially, $R_0 = \emptyset$) will be distributed among the workers based on the candidate data vertices that can match $v_i$; in the certain worker, the algorithm computes $R_i$ by merging $R_{i-1}$ with the matches of the \Star formed by $v_i$ and its ``white'' neighbors $\mathcal{N}^w_Q(v_i)$, namely $\Star(v_i; \mathcal{N}^w_Q(v_i))$; after $v_i$ is matched, $v_i$ is colored as ``black'' and its ``white'' neighbors will be colored as ``gray''; essentially, this process is analogous to \starjoin by processing $R_i = R_{i-1} \Join \Star(v_i; \mathcal{N}^w_Q(v_i))$. Thus, \psgl can be seen as an alternative implementation of \starjoin on \code{Pregel} \cite{Malewicz10}. In this work, we also implement \psgl using a \code{Pregel} on \timely. Note that we introduce \code{Pregel} api to as much as possible replay the implementation of \psgl. In fact, it is simply wrapping \timely's primitive operators such as \code{binary\_notify} and \code{loop} \footnote{\url{https://github.com/frankmcsherry/blog/blob/master/posts/2015-09-21.md}}, and barely introduces extra cost to the implementation. Our experimental results demonstrate similar findings as prior work \cite{Lai2016} that \psgl's performance is dominated by \cliquejoin \cite{Lai2016}. Thus, we will not further discuss this algorithm in this paper.


\stitle{CrystalJoin and its implementation.} \crystaljoin aims at resolving the ``output crisis'' by compressing the results of subgraph matching \cite{Qiao2017}. The authors defined a structure called \textit{crystal}, denoted $\mathcal{Q}(x, y)$. A crystal is a subgraph of $Q$ that contains two sets of vertices $V_x$ and $V_y$ ($|V_x| = x$ and $|V_y| = y$), where the induced subgraph $Q(V_x)$ is a $x$-clique, and every vertex in $V_y$ connects to all vertices of $V_x$. We call $V_x$ clique vertices, and $V_y$ the bud vertices. The algorithm first obtains the minimum vertex cover $V_Q^c$, and then applies the \textit{Core-Crystal Decomposition} to decompose the query graph into the \textit{core} $Q(V^c_Q)$ and a set of \textit{crystal}s $\{\mathcal{Q}_1(x_1, y_1), \ldots, \mathcal{Q}_t(x_t, y_t)\}$. The crystals must satisfy that $\forall 1 \leq i \leq t$, $Q(V_{x_i}) \subseteq Q(V_Q^c)$, namely, the clique part of each crystal is a subgraph of the core. As an example, we plot a query graph and the corresponding core-crystal decomposition  in \reffig{core_crystals}. Note that in the example, both crystals have an edge (i.e. 2-clique) as the clique part. 
\begin{figure}[htb]
  \centering
  \includegraphics[scale=0.95]{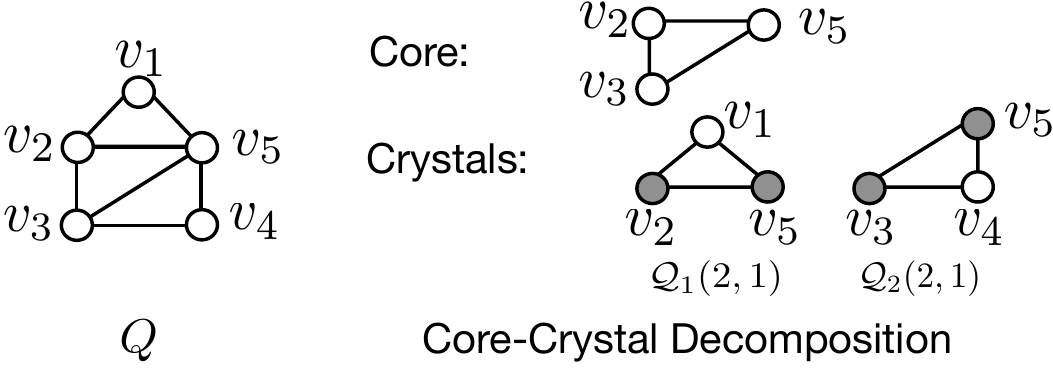}
  \caption{\small{The Core-Crystal Decomposition of the query graph.}}
  \label{fig:core_crystals}
\end{figure}

With core-crystal decomposition, the computation has accordingly split into three stages:
\begin{enumerate}
    \setlength\itemsep{-0.3em}
	\item \textbf{Core computation}. Given that $Q(V_Q^c)$ itself is a query graph, the algorithm can be recursively applied to compute $Q(V_Q^c)$ according to \cite{Qiao2017}.
	\item \textbf{Crystal computation}. A special case of crystal is $\mathcal{Q}(x, 1)$, which is indeed a $(x+1)$-clique. Suppose an instance of the $Q(V_x)$ is $f_x = \{u_1, u_2, \ldots, u_x\}$, we can represent the matches w.r.t. $f_x$ as $(f_x, I_y)$, where $I_y = \bigcap_{i=1}^x \mathcal{N}_G(u_i)$ denotes the set of vertices that can match $V_y$. This can naturally be extended to the case with $y > 1$, where any $y$-combinations of the vertices of $I_y$ together with $f_x$ represent a match. This way, the matches of crystals can be largely compressed. 
	\item \textbf{One-time assembly}. This stage assembles the core instances and the compressed crystal matches to produce the final results. More precisely, this stage is to \textbf{join} the core instance with the crystal matches.
\end{enumerate} 

\comment{
For \crystaljoin, we notice some technical obstacles to implement the algorithm according to the paper. Firstly, the core $Q(V_Q^c)$ may be disconnected, a case that can produce exponential number of results. In the original implementation, the authors applied a query-specific optimization, which is against our motivation of solving general queries. Thus, we adapt \crystaljoin by replacing the core with the induced subgraph of the minimum connected vertex cover $V_Q^{cc}$. 

Secondly, the authors proposed to precompute the cliques up to certain $k$, while it is often cost-prohibitive to do so in practice. Take UK (\reftable{datasets}) dataset as an example, the triangles, 4-cliques and 5-cliques are respectively about $20$, $600$ and $40000$ times more than the edges of the graph. 
To address this issue, we try to adapt \crystaljoin to existing frameworks to avoid precomputing the cliques. We first consider adapting \crystaljoin to the join framework of \eaat algorithms by treating the crystals as join units, as they are essentially cliques. However, such adaptation is non-trivial as the core part may not be join unit, and needs to be recursively computed. Consequently, we exploit the \optjoin's scheme by treating \crystaljoin as an order of expanding vertices. Specifically, we first compute the core part using \optjoin, then for the crystals $\mathcal{Q}(x, y)$, we can compute and maintain the $V_y$ matches in the form of arrays without unwrapping them. This will compress the intermediate results to avoid a potential blowout.

\comment{
\begin{example}
	We show the process of the adapted \crystaljoin using the query example in \reffig{core_crystals}. It is trivial to compute the core part. Consider an instance of the core, saying $\{u_1, u_2, u_3\}$. Let $\mathcal{N}_G(u_1) = \{u_2, u_3, u_{100}, \ldots, u_{200}\}$, $\mathcal{N}_G(u_2) = \{u_1, u_{100}, \ldots, u_{200}\}$ and $\mathcal{N}_G(u_3) = \{u_1, u_{100}, \ldots, u_{200}\}$. We then expand the prefix to include the compressed matches of $\mathcal{Q}_1$ as $\{u_1, u_2, u_3; [u_{100}, \ldots, u_{200}]\}$, in the next step, we can treat this as the ``compressed'' prefix, and directly expand to include the compressed matches of $\mathcal{Q}_2$ as $\{u_1, u_2, u_3; [u_{100}, \ldots, u_{200}], [u_{100}, \ldots, u_{200}]\}$. When we complete the expanding, we then decompress each record for the actual results.
\end{example}
}
}


We notice two technical obstacles to implement \crystaljoin according to the paper. Firstly, it is worth noting that the core $Q(V_Q^c)$ may be disconnected, a case that can produce exponential number of results. The authors applied a query-specific optimization in the original implementation to resolve this issue. Secondly, the authors proposed to precompute the cliques up to certain $k$, while it is often cost-prohibitive to do so in practice. Take UK (\reftable{datasets}) dataset as an example, the triangles, 4-cliques and 5-cliques are respectively about $20$, $600$ and $40000$ times larger than the graph itself. It is worth noting that the main purpose of this paper is not to study how well each algorithm performs for a specific query, which has its theoretical value, but can barely guide practice. After communicating with the authors, we adapt \crystaljoin in the following. Firstly, we replace the core $Q(V_Q^c)$ with the induced subgraph of the minimum connected vertex cover $Q(V_Q^{cc})$. Secondly, instead of implementing \crystaljoin as a strategy, we use it as an alternative join plan (matching order) for \vaat. According to \crystaljoin, we first match $V_Q^{cc}$, while the matching order inside and outside $V_Q^{cc}$ still follows \vaat's greedy heuristic (\refsec{vertex-at-a-time}). It is worth noting that this adaptation achieves high performance comparable to the original implementation. In fact, we also apply \crystaljoin plan to \eaat, while it does not perform as well as the \vaat version, thus we do not discuss this implementation.

\stitle{FullRep and its implementation.} \rep simply maintains a full replica of the graph in each physical machine. Each worker picks one independent share of computation and solves it using existing local algorithm.

The implementation is straightforward. We let each worker pick its share of computation via a Round-Robin strategy, that is we settle an initial query vertex $v_1$, and let first worker match $v_1$ with $u_1$ to continue the remaining process, and second worker match $v_1$ with $u_2$, and so on. This simple strategy already works very well on balancing the load of our benchmarking queries (\reffig{queries}). We use ``DualSim'' for unlabelled matching and ``CFLMatch'' for labelled matching as \multiwayjoin.

\subsection{Worst-case Optimality.}
\label{sec:worst_case_opt}
\comment{
Given a query $Q$ and the data graph $G$, we denote the maximum possible result set as $\overline{R}_G(Q)$. Simply speaking, an algorithm is worst-case optimal if the aggregation of the total intermediate results is bounded by $\Theta(|\overline{R}_G(Q)|)$. Todd \cite{Veldhuizen2014} and Ngo et al. \cite{Ngo2018} almost meanwhile proposed the worst-case join algorithm in relational database. Ammar et al. proposed \genericjoin for subgraph matching based on Ngo's algorithm \cite{Ngo2018}. Following \genericjoin, \bigjoin was developed and shown to be worst-case optimal \cite{Ammar2018}.

As for \cliquejoin, the optimality has not been claimed in the paper \cite{Lai2016}. In this work, we further contribute to the finding of \cliquejoin's worst-case optimality. We refer interested readers to the full paper \cite{Lai2019} for a complete proof.
}
Given a query $Q$ and the data graph $G$, we denote the maximum possible result set as $\overline{R}_G(Q)$. Simply speaking, an algorithm is worst-case optimal if the aggregation of the total intermediate results is bounded by $\Theta(|\overline{R}_G(Q)|)$. Ngo et al. proposed a class of worst-case optimal join algorithm called \genericjoin \cite{Ngo2018}, and we first overview this algorithm. 

\stitle{GenericJoin.} Let the join be $R(V) = \Join_{F \subseteq \Psi} R(F)$, where $\Psi = \{U\;|\;U \subseteq V\}$ and $V = \bigcup_{U\in \Psi} U$. Given a vertex subset $U \subseteq V$, let $\Psi_U = \{V'\;|\;V' \in \Psi \land V' \cap U \neq \emptyset\}$, and for a tuple $t \in R(V)$, denote $t_U$ as $t$'s projection on $U$. We then show the \code{GenericJoin} in \refalg{generic_join}. 

\begin{algorithm}[htb]
\SetAlgoVlined
\SetFuncSty{textsf}
\SetArgSty{textsf}
\small
\caption{\code{GenericJoin}$(V, \Psi, \Join_{U \in \Psi}R(U))$}
\label{alg:generic_join}
\State{$R(V) \leftarrow \emptyset$;} \\
\If{$|V| = 1$} {
\State{\textbf{Return} $\bigcup_{U \in \Psi} R(U)$;}
}
\State{$V \leftarrow (I, J)$, where $\emptyset \neq I \subset V$, and $J = V \setminus I$;} \\
\State{$R(I) \leftarrow \code{GenericJoin}(I, \Psi_I, \Join_{U \in \Psi_I} \pi_I(R(U)))$;} \\
\ForAll{$t_I \in R(I)$} {
\State{$R(J)_{w.r.t.\;t_I} \leftarrow \code{GenericJoin}(J, \Psi_J, \Join_{U \in \Psi_J} \pi_J(R(U) \ltimes t_I))$; } \\
\State{$R(V) \leftarrow R(V) \cup \{t_I\} \times R(J)_{w.r.t.\;t_I}$;}
}
\State{\textbf{Return} $R(V)$;}
\end{algorithm}

In \refalg{generic_join}, the original join is recursively decomposed into two parts $R(I)$ and $R(J)$ regarding the disjoint sets $I$ and $J$.  From line~5, it is clear that $R(I)$ will record $R(V)$'s projection on $I$, thus we have $|R(I)| \leq |\overline{R}(V)|$, where $\overline{R}(V)$ is the maximum possible results of the query. Meanwhile, in line~7, the semi-join $R(U) \ltimes t_I = \{r\;|\; r \in R(U) \land r_{(U \cup I)} = t_{(U \cup I)}\}$ only retains those $R(J)$ w.r.t. $t_I$ that can end up in the join result, which infers that the  $R(J)$ must also be bounded by the final results. This intuitively explains the worst-case optimality of \code{GenericJoin}, while we refer interested readers to \cite{Ngo2018} for a complete proof. 

It is easy to see that \bigjoin is worst-case optimal. In \refalg{generic_join}, we select $I$ in line~4 by popping the edge relation $E(v_s, v_i) (s < i)$ in the $i^{th}$ step. In line~7, the recursive call to solve the semi-join $R(U) \ltimes t_I$ actually corresponds to the intersection process. 

\stitle{Worst-case Optimality of CliqueJoin.} Note that the two clique relations in \refeq{clique_join} interleave one common edge $(v_2, v_4)$ in the query graph. This optimization, called ``overlapping decomposition'' \cite{Lai2016}, eventually contributes to \cliquejoin's worst-cast optimality. Note that it is not possible to apply this optimization to \starjoin and \ttjoin. We have the following theorem.

\begin{theorem}
\small
\label{thm:clique_join_optimal}
\cliquejoin is worst-case optimal while applying ``overlapped decomposition''.
\end{theorem}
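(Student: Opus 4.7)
The plan is to demonstrate that CliqueJoin with overlapped decomposition is a specific instance of \code{GenericJoin} (\refalg{generic_join}), from which worst-case optimality will follow by the result of \cite{Ngo2018}. The correspondence sends the clique decomposition $[q]$ to the set $\Psi = \{V_{p_1}, \ldots, V_{p_k}\}$ of attribute sets, each equipped with its base relation $R(V_{p_i})$ of clique matches in $G$; the full join then reads $\Join_{p \in [q]} R(V_p) = R_G(Q)$, which matches the join shape assumed by \code{GenericJoin} with $V = V_Q$.

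To translate the bushy plan of CliqueJoin into recursive splits, I would traverse the join tree top-down. At each internal node $R(V_L) \Join R(V_R)$, I would set $I = V_L$ and $J = V_R \setminus V_L$, so that $I$ and $J$ form the disjoint partition of $V_L \cup V_R$ required by line~4 of \refalg{generic_join}. The left subtree then recursively produces $R(I)$; for each $t_I$, the right subtree, augmented by the semi-join $R(U) \ltimes t_I$ of line~7, yields the extension over $J$ consistent with $t_I$ on the shared vertices $V_L \cap V_R$. Unrolling this construction down to single-clique leaves recovers exactly CliqueJoin's hash-join plan, after which the worst-case optimality of \code{GenericJoin} proved in \cite{Ngo2018} transfers directly to CliqueJoin.

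I expect the hard part to be handling the overlapping vertex sets $V_L \cap V_R \neq \emptyset$ at internal nodes of the bushy plan. \code{GenericJoin} insists on $I$ and $J$ being disjoint, yet the cliques in the right subtree may touch both; the fix is to verify that the semi-join in line~7 correctly restricts these cliques to tuples that agree with $t_I$ on the shared vertices, so that every intermediate relation materialized by CliqueJoin coincides with one produced by \code{GenericJoin} and hence inherits its AGM-based size bound. This is precisely where the hypothesis of overlapped decomposition is indispensable: when two cliques share an edge, the shared edge is semantically enforced by both base relations and the alignment with \code{GenericJoin} holds; under an edge-disjoint decomposition (as in plain \starjoin or \ttjoin) no such alignment exists and some intermediate relation may escape the AGM envelope, breaking worst-case optimality. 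The example in \refeq{clique_join}, where the two triangles share the edge $(v_2, v_4)$ and the join is literally $R_G(Q)$ with no spurious intermediate blow-up, is the prototype of this phenomenon and should guide the general argument.
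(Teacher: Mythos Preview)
Your proposal is correct and takes essentially the same approach as the paper: both arguments cast \cliquejoin as an instance of \code{GenericJoin} and identify overlapped decomposition as the mechanism that makes the semi-join in line~7 of \refalg{generic_join} hold automatically, so that worst-case optimality transfers from \cite{Ngo2018}. The only cosmetic difference is that the paper instantiates the recursion by peeling off one clique/star join unit at a time (modifying the stopping condition accordingly), whereas you follow the bushy join tree and set $I = V_L$, $J = V_R \setminus V_L$ at each internal node; both are valid choices for the $I/J$ split in \refalg{generic_join} and lead to the same conclusion.
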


\begin{proof}
 We implement \cliquejoin using \refalg{generic_join} in the following. Note that $Q(V)$ denotes a subgraph of $Q$ induced by $V$. In line~2, we change the stopping condition to ``$Q(I)$ is either a clique or a star''. In line~4, the $I$ is selected such that $Q(I)$ is either a clique or a star. Note that by applying the ``overlapping decomposition'' in \cliquejoin, the sub-query of the $J$ part must be the $J$-induced graph $Q(J)$, and it will also include the edges of $E_{Q(I)} \cap E_{Q(J)}$, which infers that $R(Q(J)) = R(Q(J)) \ltimes R(Q(I))$ , and just reflects the semi-join in line~7. Therefore, \cliquejoin belongs to \code{GenericJoin}, and is thus worst-case optimal.
\end{proof}

\section{Optimizations}
\label{sec:opt}
We introduce the three general-purpose optimizations, \batching, \trindex and \compression in this section, and how we orthogonally apply them to \eaat and \vaat algorithms. In the rest of the paper, we will use the strategy \eaat, \vaat, \multiway instead of their corresponding algorithms, as we focus on strategy-level comparison. 

\subsection{Batching}
\label{sec:batching}
Let $R(V_i)$ be the partial results that match the given vertices $V_i = \{v_{s_i}, v_{s_2}, \ldots, v_{s_i}\}$ ($R_i$ for short if $V_i$ follows a given order), and $R(V_j)$ denote the more complete results with $V_i \subset V_j$. Denote $R_j|R_i$ as the tuples in $R_j$ whose projection on $V_i$ equates $R_i$. Let's partition $R_i$ into $b$ \textbf{disjoint} parts $\{R^1_i, R^2_i, \ldots, R^b_i\}$. We define \batching on $R_j|R_i$ as the technique to independently process the following sub-tasks that compute $\{R_j|R_i^1, R_j|R_i^2, \ldots, R_j|R_i^b\}$. Obviously, $R_j|R_i = \bigcup_{k = 1}^b R_j | R_i^k$. 

\stitle{WOptJoin.} Recall from \refsec{vertex-at-a-time} that \vaat progresses according to a predefined matching order $\{v_1, v_2, \ldots, v_n\}$. In the $i^{th}$ round, \vaat will \code{Propose} on each $p \in R_{i-1}$ to compute $R_i$. It is not hard to see that we can easily apply \batching to the computation of $R_i|R_{i-1}$ by randomly partitioning $R_{i-1}$. For simplicity, the authors implemented \batching on $R(Q) | R_1(v_1)$. Note that $R_1(v_1) = V_G$ in unlabelled matching, which means that we can achieve \batching simply by partitioning the data vertices\footnote{Practically, it is more efficient to start from matching the edges instead of the vertices, and we can batch on $R(Q)|R_2$, where $R_2 = E_G$.}. For short, we also say the strategy batches on $v_1$, and call $v_1$ the batching vertex. We follow the same idea to apply \batching to \eaat algorithms.


\stitle{BinJoin.} While it is natural for \vaat to batch on $v_1$, it is non-trivial to pick such a vertex for \eaat. Given a decomposition of the query graph $\{p_1, p_2, \ldots, p_s\}$, where each $p_i$ is a join unit, we have $R(Q) = R(p_1) \Join R(p_2) \cdots \Join R(p_s)$. If we partition $R_1(v)$ so as to batch on $v \in V_Q$, we correspondingly split the join task, and one of the sub-task is $R(Q) | R_1^k(v) = R(p_1) | R_1^k(v) \Join \cdots \Join R(p_s) | R_1^k(v)$ ($R_1^k(v)$ is one partition of $R_1(v)$). Observe that if there exists a join unit $p$ where $v \not \in V_p$, we must have $R(p) = R(p) | R_1^k(v)$, which means $R(p)$ have to be fully computed in each sub-task. Let's consider the example query in \refeq{ttjoin}.
\begin{equation*}
    R(Q) = T_1(v_1, v_2, v_4) \Join T_2(v_2, v_3, v_4) \Join T_3(v_3, v_4).
\end{equation*}
Suppose we batch on $v_1$, the above join can be divided into the following independent sub-tasks:
\begin{equation*}
    \small
    \begin{split}
    R(Q)|R_1^1(v_1) &= (T_1(v_1, v_2, v_4)|R_1^1(v_1)) \Join T_2(v_2, v_3, v_4) \Join T_3(v_3, v_4), \\
    R(Q)|R_1^2(v_1) &= (T_1(v_1, v_2, v_4)|R_1^2(v_1)) \Join T_2(v_2, v_3, v_4) \Join T_3(v_3, v_4), \\
    &\cdots \\
    R(Q)|R_1^b(v_1) &= (T_1(v_1, v_2, v_4)|R_1^b(v_1)) \Join T_2(v_2, v_3, v_4) \Join T_3(v_3, v_4).
    \end{split}
\end{equation*}
It is not hard to see that we will have to re-compute $T_2(v_2, v_3, v_4)$ and $T_3(v_3, v_4)$ in all the above sub-tasks. Alternatively, if we batch on $v_4$, we can avoid such re-computation as $T_1, T_2$ and $T_3$ can all be partitioned in each sub-task. Inspired by this, for \eaat, we come up with the heuristic to apply \batching on the vertex that presents in as many join units as possible. Note that such vertex can only be in the join key, as otherwise it must at least not present in one side of the join. For complex query, we can still have join unit that does not contain any vertex for \batching after applying the above heuristic. In this case, we either re-compute the join unit, or cache it on disk. Another problem caused by this is potential memory burden of the join. Thus, we devise the \textit{join-level} \batching following the idea of external \code{MergeSort}. Specifically, we inject a \code{Buffer-and-Batch} operator for the two data streams before they arrive at the \code{Join} operator. \code{Buffer-and-Batch} functions in two parts:
\begin{itemize}
    \setlength\itemsep{-0.3em}
	\item \code{Buffer}: While the operator receives data from the upstream, it buffers the data until reaching a given threshold. Then the buffer is sorted according to the join key's hash value and spilled to the disk. The buffer is reused for the next batch of data.
	\item \code{Batch}:  After the data to join is fully received, we read back the data from the disk in a batching manner, where each batch must include all join keys whose hash values are within a certain range.
\end{itemize}
While one batch of data is delivered to the \code{Join} operator, \code{Timely} allows us to supervise the progress and hold the next batch until the current batch completes. This way, the internal memory requirement is one batch of the data. Note that such join-level \batching is natively implemented in Hadoop's ``Shuffle'' stage, and we replay this process in \timely to improve the scalability of the algorithm.

\subsection{Triangle Indexing}
\label{sec:trindex}
As the name suggests, \trindex precomputes the triangles of the data graph and indices them along with the graph data to prune infeasible results. The authors of \bigjoin \cite{Ammar2018} optimized the 4-clique query by using the triangles as base relations to join, which reduces the rounds of join and network communication. In \cite{Qiao2017}, the authors proposed to not only maintain triangles, but all $k$-cliques up to a given $k$. As we mentioned earlier, it incurs huge extra cost of maintaining triangles already, let alone larger cliques. 

In addition to the default hash partition, Lai et al. proposed ``triangle partition'' \cite{Lai2016} by also incorporating the edges among the neighbors (it forms triangles with the anchor vertex) in the partition. ``Triangle partition'' allows \eaat to use clique as the join unit \cite{Lai2016}, which greatly reduces the intermediate results of certain queries and improves the performance. ``Triangle partition'' is in de facto a variant of \trindex, which instead of explicitly materializing the triangles, maintains them in the local graph structure (e.g. adjacency list). As we will show in the experiment (\refsec{experiments}), this will save a lot of space compared to explicit triangle materialization. Therefore, we adopt the ``triangle partition'' for \trindex optimization in this work.

\stitle{BinJoin.} Obviously, \eaat becomes \cliquejoin with \trindex, and \starjoin (or \ttjoin) otherwise. With worst-case optimality guarantee (\refsec{worst_case_opt}), \eaat should perform much better with \trindex, which is also observed in ``Exp-1'' of \refsec{experiments}.


\stitle{WOptJoin.} In order to match $v_i$ in the $i^{th}$ round, \vaat utilizes \code{Count}, \code{Propose} and \code{Intersect} to process the intersection of \refeq{intersection}. For ease of presentation, suppose $v_{i+1}$ connects to the first $s$ query vertices $\{v_1, v_2, \ldots, v_s\}$, and given a partial match, $\{f(v_1), \ldots, f(v_s)\}$, we have $C(v_{i + 1})=\bigcap_{j = 1}^s \mathcal{N}_G(f(v_j))$. In the original implementation, it is required to send $(p; C(v_{i + 1}))$ via network to all machines that contain each $f(v_j) (1 \leq j \leq s)$ to process the intersection, which can render massive communication cost. In order to reduce the communication cost, we implement \trindex for \vaat in the following. We first group $\{v_1, \ldots, v_s\}$ such that for each group $U(v_x)$, we have 
\begin{equation*}
U(v_x) = \{v_x\} \cup \{v_y \;|\; (v_x, v_y) \in E_Q\}. 
\end{equation*}
Because of \trindex, we have $\mathcal{N}_G(f(v_y))$ ($\forall v_y \in U(v_x)$) maintain in $f(v_x)$'s partition. Thus, we only need to send the prefix to $f(v_x)$'s machine, and the intersection within $U(v_x)$ can be done locally. We process the grouping using a greedy strategy that always constructs the largest group from the remaining vertices.

\begin{remark}
\label{rem:triangle_partition}
The ``triangle partition'' may result in maintaining a large portion of the data graph in certain partition. Lai et al. pointed out this issue, and proposed a space-efficient alternative by leveraging the vertex orderings \cite{Lai2016}. That is, given the partitioned vertex as $u$, and two neighbors $u'$ and $u''$ that close a triangle, we place the edge $(u', u'')$ in the partition only when $u < u' < u''$. Although this alteration reduces storage, it may affect the effectiveness of \trindex for \vaat and the implementations of \batching and \compression for \eaat algorithms. Take \vaat as an example, after using the space-efficient ``triangle partition'', we should modify the above grouping as:
\begin{equation*}
    U(v_x) = \{v_x\} \cup  \{v_y \;|\; (v_x, v_y) \in E_Q \land (v_x, v_y) \in O_Q \}.
\end{equation*}
Note that the order between query vertices are for symmetry breaking (\refsec{problem_definition}), and it may not present in certain query, which makes \trindex completely useless for \vaat.
\end{remark}

\subsection{Compression}
\label{sec:compression}
Subgraph matching is a typical combinatorial problem, and can easily produce results of exponential size. \compression aims to maintain the (intermediate) results in a compressed form to reduce resource allocation and communication cost. In the following, when we say ``compress a query vertex'', we mean maintaining its matched data vertices in the form of an array, instead of unfolding them in line with the one-one mapping of a match (\refdef{isomorphism}). 
Qiao et al. proposed \crystaljoin to study \compression in general for subgraph matching. As we introduced in \refsec{others}, \crystaljoin first extracts the minimum vertex cover as uncompressed part, and then it can compress the remaining query vertices as the intersection of certain uncompressed matches' neighbors. Such \compression leverages the fact that all dependencies (edges) of the compressed part that requires further computation are already covered by the uncompressed part, thus it can stay compressed until the actual matches are requested. \crystaljoin inspires a heuristic for doing \compression, that is \textit{to compress the vertices whose matches will \textbf{not} be used in any future computation}. In the following, we will apply the same heuristic to the other algorithms. 

\stitle{BinJoin.} Obviously we can not compress any vertex that presents in the join key. What we need to do is to simply locate the vertices to compress in the join unit, namely star and clique. For star, the root vertex must remain uncompressed, as the leaves' computation depends on it. For clique, we can only compress one vertex, as otherwise the mutual connection between the compressed vertices will be lost. In a word, we compress two types of vertices for \eaat, (1) non-key and non-root vertices of a star join unit, (2) one non-key vertex of a clique join unit.

\stitle{WOptJoin.} Based on a predefined join order $\{v_1, v_2, \ldots, v_n\}$, we can compress $v_i$ ($1 \leq i \leq n$), if there does not exist $v_j$ ($i < j$) such that $(v_i, v_j) \in E_Q$. In other words, $v_i$'s matches will never be involved in any future intersection (computation). Note that $v_n$'s can be trivially compressed. With \compression, when $v_i$ is compressed, we will maintain its matches as an array instead of unfolding it into the prefix like a normal vertex.


\section{Experiments}
\label{sec:experiments}
\subsection{Experimental settings}
\label{sec:experimental_settings}
\stitle{Environments.} We deploy two clusters for the experiments: (1) a local cluster of 10 machines connected via one 10GBps switch and one 1GBps switch. Each machine has 64GB memory, 1 TB disk and 1 Intel Xeon CPU E3-1220 V6 3.00GHz with 4 physical cores; (2) an AWS cluster of 40 ``r5-2xlarge'' instances connected via a 10GBps switch, each with 64GB memory, 8 vCpus and 500GB Amazon EBS storage. By default we use the local cluster of 10 machines with 10GBps switch. We run 3 workers in each machine in the local cluster, and 6 workers in the AWS cluster for \timely. The codes are implemented based on the open-sourced \timely dataflow system \cite{timely} using Rust 1.32. We are still working towards open-sourcing the codes, and the bins together with their usages are temporarily provided\footnote{\url{https://goo.gl/Xp5BrW}} to verify the results. 

\stitle{Metrics.}  In the experiments, we measure query time $T$ as the slowest worker's wall clock time from an average of three runs. We allow 3 hours as the maximum running time for each test. We use \timeout and \oom to indicate a test case runs out of the time limit and out of memory, respectively. By default we will not show the \oom results for clear presentation. We divide $T$ into two parts, the computation time $T_{comp}$ and the communication time $T_{comm}$. We measure $T_{comp}$ as the time the slowest worker spends on actual computation by timing every computing function. We are aware that the actual communication time is hard to measure as \timely overlaps computation and communication to improve throughput. We consider $T - T_{comp}$, which mainly records the time the worker waits data from the network channel (a.k.a. communication time). While the other part of communication that overlaps computation is of less interest as it does not affect the query progress. As a result, we simply let $T_{comm} = T - T_{comp}$ in the experiments. We measure the maximum peak memory using Linux's ``\code{time -v}'' in each machine. We define the communication cost as the number of integers a worker receives during the process, and measure the maximum communication cost among the workers accordingly. 

\stitle{Dataset Formats.} We preprocess each dataset as follows: we treat it as a simple undirected graph by removing self-loop and duplicate edges, and format it using ``Compressed Sparse Row'' (CSR) \cite{csrwiki}. We relabel the vertex id according to the degree and break the ties arbitrarily. 

\stitle{Compared Strategies.} In the experiments, we implement \eaat and \vaat with all \batching, \trindex and \compression optimizations (\refsec{opt}). \multiway is implemented with ``Hypercube Optimization'' \cite{Chu2015}, and ``DualSim'' (unlabelled) \cite{Kim2016} and ``CFLMatch'' (labelled) \cite{Bi2016} as local algorithms. \rep is implemented with the same local algorithms as \multiway.

\stitle{Auxiliary Experiments.} We have also conducted several auxiliary experiments in the appendix to study the strategies of \eaat, \vaat, \multiway and \rep.

\subsection{Unlabelled Experiments}
\stitle{Datasets.} The datasets used in this experiment are shown in \reftable{datasets}. All datasets except SY are downloaded from public source, which are indicated by the letter in the bracket (S \cite{snap}, W \cite{webgraph}, D \cite{challenge9}). All statistics are measured as $G$ is an undirected graph. Among the datasets, GO is a small dataset to study cases of extremely large (intermediate) result set; LJ, UK and FS are three popular datasets used in prior works, featuring statistics of real social network and web graph; GP is the google plus ego network, which is exceptionally dense; US and EU, on the other end, are sparse road networks. These datasets vary in number of vertices and edges, densities and maximum degree, as shown in \reftable{datasets}. We synthesize the SY data according to \cite{Chakrabarti2004} that generates data with real-graph characteristics. Note that the data occupies roughly 80GB space, and is larger than the configured memory of our machine. We synthesize the data because we do not find public accessible data of this size. Larger dataset like Clueweb \cite{clubweb} is available, but it is beyond the processing power of our current cluster. 

Each data is hash partitioned (``hash'') across the cluster. We also implement the ``triangle partition'' (``tri.'') for \trindex optimization (\refsec{trindex}). To do so, we use \bigjoin to compute the triangles and send the triangle edges to corresponding partition. We record the time $T_*$ and average number of edges $|\overline{E_*}|$ of the two partition strategies. The partition statistics are recorded using the local cluster, except for SY that is processed in the AWS cluster. From \reftable{datasets}, 
we can see that $|\overline{E_{tri.}}|$ is noticeably larger, around 1-10 times larger than $|\overline{E_\text{hash}}|$. Note that in GP and UK, which either is dense, or must contain a large dense community, the ``triangle partition'' can maintain a large portion of data in each partition. While compared to complete triangle materialization, ``triangle partition'' turns out to be much cheaper. For example, the UK dataset contains around 27B triangles, which means each partition in our local cluster should by average take 0.9B triangles (three integers); in comparison, UK's ``triangle partition'' only maintains an average of 0.16B edges (two integers) according to \reftable{datasets}. 

We use US, GO and LJ as default datasets in the experiments ``Exp-1'', ``Exp-2'' and ``Exp-3'' in order to collect useful feedbacks from successful queries, while we may not present certain cases when they do not give new findings. 

\comment{
\begin{table*}
    \small
    \centering
     \begin{tabular}{|c|c|c|c|c|c|c|c|c|c|c|c|} 
     \hline
     Datasets & Name & $|V_G|$/mil & $|E_G|$/mil & $\overline{d}_G$ & $D_G$ &$T_\text{hash}$/s & $|\overline{E_\text{hash}}|$/mil &$T_\text{spa.}$/s & $|\overline{E_\text{spa.}}|$/mil &$T_\text{tri.}$/s & $|\overline{E_\text{tri.}}|$ /mil \Ts\Bs \\
     \hline\hline
     google(S) & GO & 0.86 & 4.32 & 10.04 & 6,332 & 7.5 & 0.28 & 2.07 & 0.56 & 2.98 & 2.98 \\
    \hline
    gplus(S) & GP & 0.11 & 12.23 & 445.1 & 20,127  & 5.69 & 0.80 & 50.5 & 5.24 & 101.8 & 10.68 \\
    \hline
    usa-road(D) & US & 23.95 & 28.85 & 2.41 & 9  & 23.50 & 1.89 & 3.14 & 1.89 & 4.22 & 1.90 \\
    \hline
    livejournal(S) & LJ & 4.85 & 43.37 & 17.88 & 20,333  & 16.80 & 2.81 & 24.14 & 6.73 & 49.87 & 12.49\\
    \hline
    uk2002(W) & UK & 18.50 & 298.11 & 32.23 & 194,955  & 70.10 & 17.16 & 362.70 & 46.72 & 957.24 & 156.05 \\
    \hline
    eu-road(D) & EU & 173.80 & 342.70 & 3.94 & 20 & 532.61 & 22.47 & 43.57 & 22.47& 47.73 & 22.98 \\
    \hline
    friendster(S) & FS & 65.61 & 1806.07 & 55.05 & 5,214  & 325.05 & 118.40 & 835.14 & 217.49 & 1506.95 & 395.31 \\
    \hline
    Synthetic & SY & 372.00 & 10,000.00 & 53 & 613,461  & 2027 & 493.75 & - & 574.12 & 5604.00 & 660.61 \\
    \hline
    \end{tabular}
    \caption{The unlabelled datasets.}
    \label{tab:datasets}
\end{table*}
}

\begin{table*}
    \small
    \centering
     \caption{The unlabelled datasets.}
    \label{tab:datasets}
     \begin{tabular}{|c|c|c|c|c|c|c|c|c|c|} 
     \hline
     Datasets & Name & $|V_G|$/mil & $|E_G|$/mil & $\overline{d}_G$ & $D_G$ &$T_\text{hash}$/s & $|\overline{E_\text{hash}}|$/mil &$T_\text{tri.}$/s & $|\overline{E_\text{tri.}}|$ /mil  \\
     \hline\hline
     google(S) & GO & 0.86 & 4.32 & 5.02 & 6,332 & 1.53 & 0.28 & 2.31 & 1.23 \\
    \hline
    gplus(S) & GP & 0.11 & 12.23 & 218.2 & 20,127  & 5.57 & 0.80 & 46.5 & 10.68 \\
    \hline
    usa-road(D) & US & 23.95 & 28.85 & 2.41 & 9  & 12.43 & 1.89 & 3.69 & 1.90 \\
    \hline
    livejournal(S) & LJ & 4.85 & 43.37 & 17.88 & 20,333  & 14.25 & 2.81 & 20.33 & 12.49\\
    \hline
    uk2002(W) & UK & 18.50 & 298.11 & 32.23 & 194,955  & 61.99 & 17.16 & 266.60 & 156.05 \\
    \hline
    eu-road(D) & EU & 173.80 & 342.70 & 3.94 & 20 & 72.96 & 22.47 & 16.98 & 22.98 \\
    \hline
    friendster(S) & FS & 65.61 & 1806.07 & 55.05 & 5,214  & 378.26 & 118.40 & 368.95 & 395.31 \\
    \hline
    Synthetic & SY & 372.00 & 10,000.00 & 53 & 613,461  & 2027 & 493.75 & 5604.00 & 660.61 \\
    \hline
    \end{tabular}
\end{table*}

\stitle{Queries.} The queries are presented in \reffig{queries}. We also give the partial order under each query for symmetry breaking. The queries except $q_7$ and $q_8$ are selected based on all prior works \cite{Ammar2018, Lai2015, Lai2016, Qiao2017, Shao2014}, while varying in number of vertices, densities, and the vertex cover ratio $|V^{cc}_Q|/|V_Q|$, in order to better evaluate the strategies from different perspectives. The three queries $q_7$, $q_8$ and $q_9$ are relatively challenging given their result scale. For example, the smallest dataset GO contains $2,168$B(illion) $q_7$, $330$B $q_8$ and $1,883$B $q_9$, respectively. For short of space, we record the number of results of each successful query on each dataset in the appendix. Note that $q_7$ and $q_8$ are absent from existing works, while we benchmark $q_7$ considering the importance of path query in practice, and $q_8$ considering the varieties of the join plans. 

\begin{figure}[htb]
    \centering
    \includegraphics[scale=0.9]{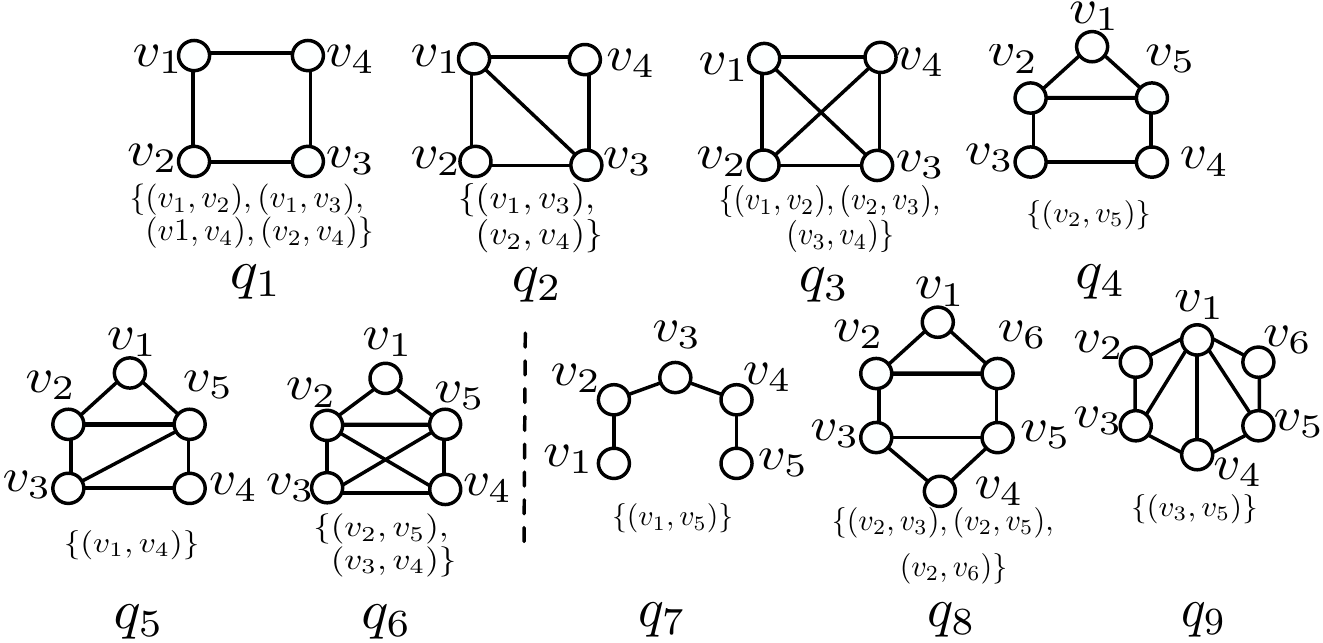}
    \caption{\small{The unlabelled queries.}}
    \label{fig:queries}
  \end{figure}

\stitle{Exp-1: Optimizations.} We study the effectiveness of \batching, \trindex and \compression for both \eaat and \vaat strategies, by comparing \eaat and \vaat with their respective variants with one optimization off, namely ``without Batching'', ``without Trindexing'' and ``without Compression''. In the following, we use the suffix of ``(w.o.b.)'', ``(w.o.t.)'' and ``(w.o.c.)'' to represent the three variants. We use the queries $q_2$ and $q_5$, and the results of US and LJ are shown in \reffig{vary_opt}. By default, we use the batch size of $1,000,000$ for both \eaat and \vaat (according to \cite{Ammar2018}) in this experiment, and we reduce the batch size when it runs out of memory, as will be specified.

\begin{figure*}[htb]
    \centering
    \begin{subfigure}[b]{\textwidth}
        \centering
        \includegraphics[height = 0.2in]{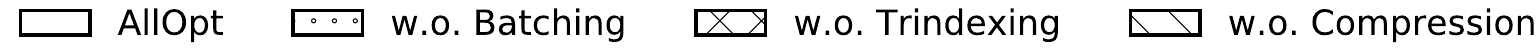}
    \end{subfigure}%
    \\
    \begin{subfigure}[b]{0.48\textwidth}
        \includegraphics[height=1.6in]{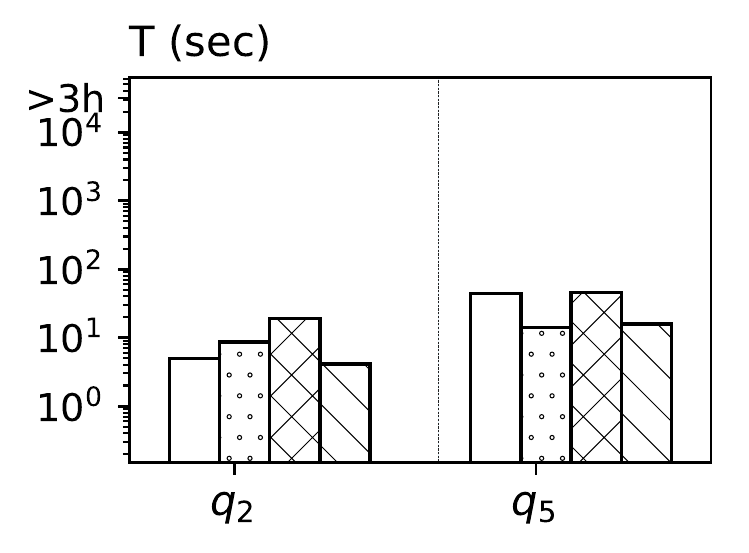}
        \caption{\eaat on US}
        \label{fig:vary_opt_eaat_us}
    \end{subfigure}%
    ~
    \begin{subfigure}[b]{0.48\textwidth}
        \includegraphics[height=1.6in]{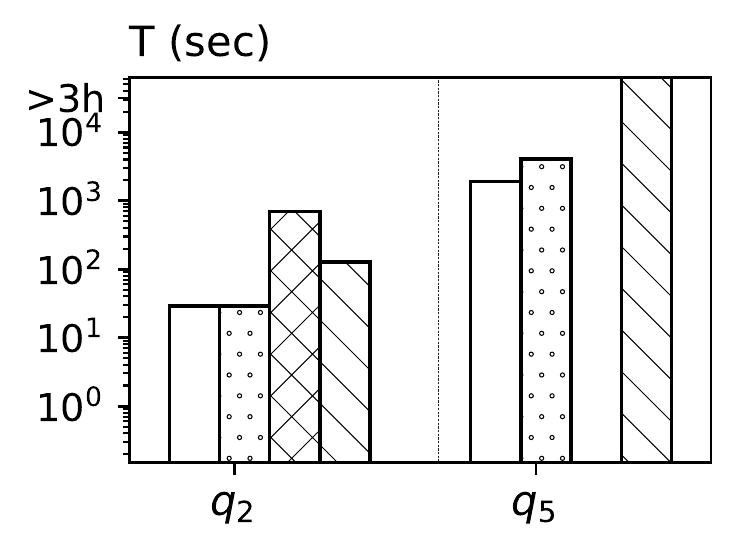}
        \caption{\eaat on LJ}
        \label{fig:vary_opt_eaat_lj}
    \end{subfigure}%
    \\
    \begin{subfigure}[b]{0.48\textwidth}
        \includegraphics[height=1.6in]{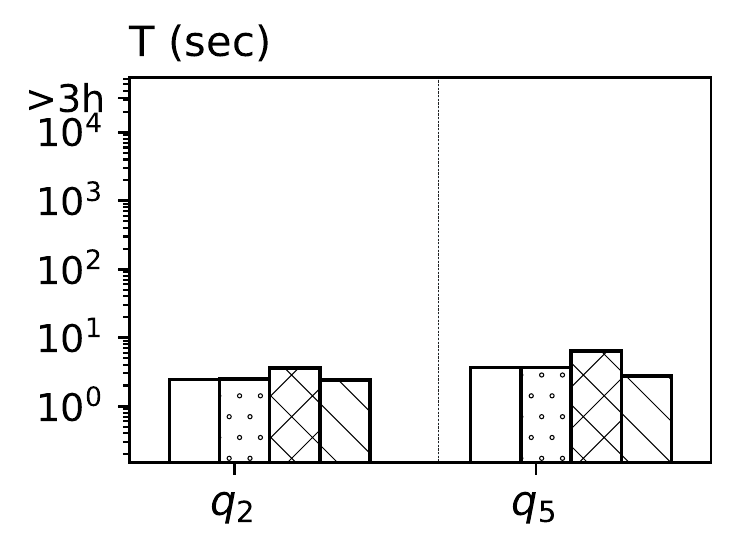}
        \caption{\vaat on US}
        \label{fig:vary_opt_vaat_us}
    \end{subfigure}%
    ~
    \begin{subfigure}[b]{0.48\textwidth}
        \includegraphics[height=1.6in]{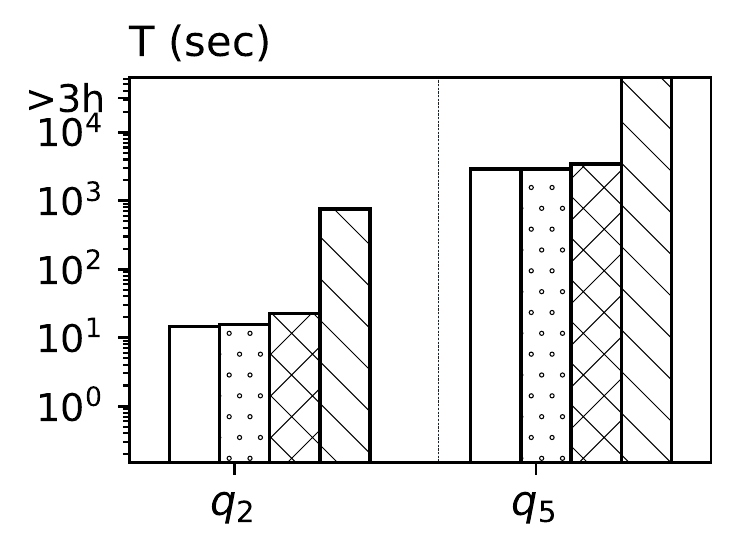}
        \caption{\vaat on LJ}
        \label{fig:vary_opt_vaat_lj}
    \end{subfigure}%
    \caption{Effectiveness of optimizations.}
    \label{fig:vary_opt}
\end{figure*}

While comparing \eaat with \binaryjoinnobatching, we observe that \batching barely affects the performance of $q_2$, but severely for $q_5$ on LJ (1800s vs 4000s (w.o.b.) ). The reason is that we still apply join-level \batching for \binaryjoinnobatching that dumps the intermediate data to the disk (\refsec{batching}). While $q_5$'s intermediate data includes the massive results of sub-query $Q(\{v_2, v_3, v_4, v_5\})$, which incurs huge amount of disk I/O (US does not have this problem as it produces very few results). We also run $q_5$ without the join-level \batching on LJ, but it fails with \oom. For \eaat, \trindex is a critical optimization, with the observed performance of \eaat better than that of \binaryjoinnotrindex, especially so on LJ. This is expected as \binaryjoinnotrindex actually degenerates to \starjoin. \compression, on the one hand, allows \eaat to run much faster than \binaryjoinnocompression for both queries on LJ, on the other hand, makes it slower on US. The reason is that US is a sparse dataset with few room for \compression, while \compression itself incurs extra cost. We also compare \eaat with \binaryjoinnocompression on the other sparse graph EU, and the results are the same.  

For \vaat strategy, \batching has little impact to the performance. Surprisingly, after using \trindex to \vaat, the improvement by average is only around 18\%. We do another experiment in the same cluster but using 1GBps switch, which shows \vaat is over 6 times faster than \genericjoinnotrindex for both queries on LJ. Note that \timely uses separate threads to buffer received data from the network. Given the same computing speed, a faster network allows the data to be more fully buffered and hence less wait for the following computation. Similar to \eaat, \compression greatly improves the performance while querying on LJ, but the opposite on US.

\stitle{Exp-2 Challenging Queries.} We study the challenging queries $q_7$, $q_8$ and $q_9$ in this experiment. We run this experiment using \eaat, \vaat, \multiway and \rep, and show the results of US and GO (LJ failed all cases) in \reffig{challenging_query}. Recall that we split the time into computation time and communication time (\refsec{experimental_settings}), here we plot the communication time as gray filling in each bar of \reffig{challenging_query}.

\begin{figure}[htb]
    \begin{subfigure}[b]{\textwidth}
    \includegraphics[height = 0.2in]{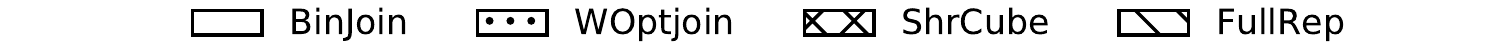}
    \end{subfigure}%
    \\
    \begin{subfigure}[b]{0.48\textwidth}
        \includegraphics[height=1.6in]{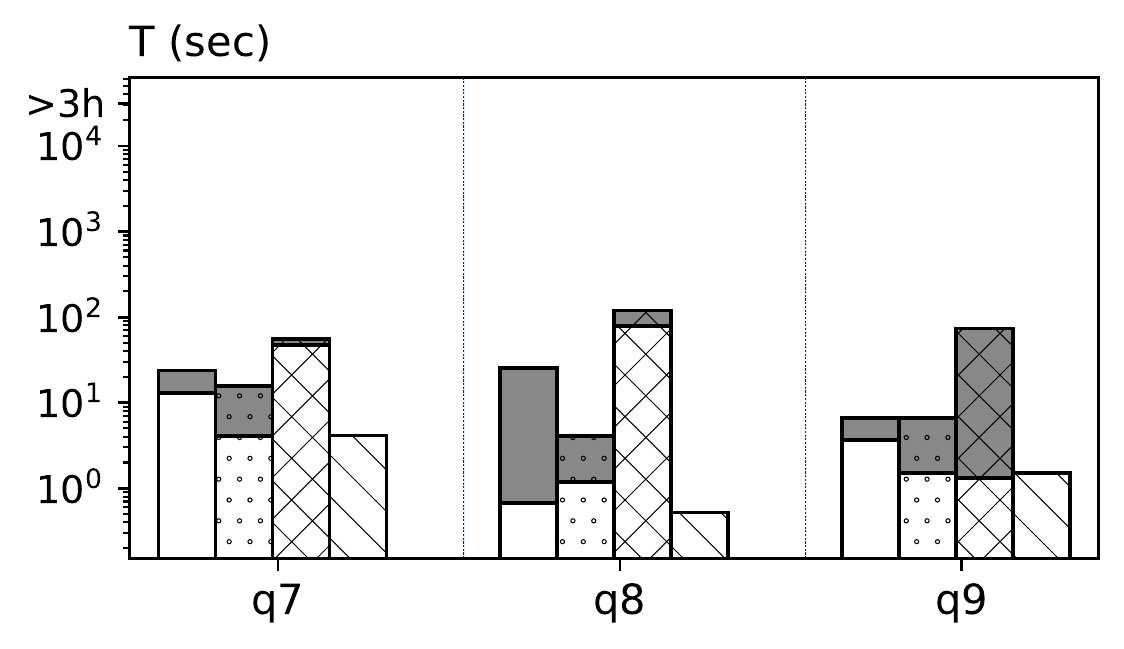}
        \caption{US}
        \label{fig:challenging_query_US}
    \end{subfigure}%
    ~
    \begin{subfigure}[b]{0.48\textwidth}
        \includegraphics[height=1.6in]{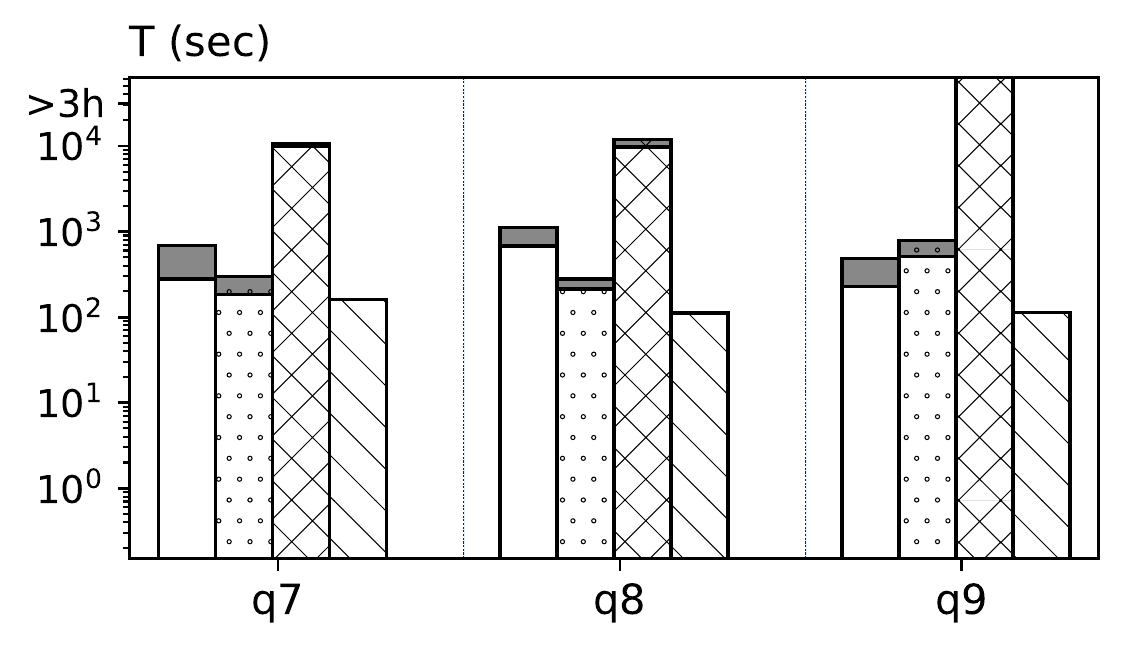}
        \caption{GO}
        \label{fig:challenging_query_GO}
    \end{subfigure}%
    \caption{Challenging queries.}
    \label{fig:challenging_query}
\end{figure}

\rep beats all the other strategies, while \multiway fails $q_8$ and $q_9$ on GO because of \timeout. Although \multiway uses the same local algorithm as \rep, it spends a lot of time on deduplication (\refsec{multiway_join}).

We focus on comparing \eaat and \vaat on GO dataset. On the one hand, \vaat outperforms \eaat for $q_7$ and $q_8$. Their join plans of $q_7$ are nearly the same except that \eaat relies on a global shuffling on $v_3$ to processing join, while \vaat sends the partial results to the machine that maintains the vertex to grow. It is hence reasonable to observe \eaat's poorer performance for $q_7$ as shuffling is typically a more costly operation. The case of $q_8$ is similar, so we do not further discuss. On the other hand, even living with costly shuffling, \eaat still performs better for $q_9$. Due to the vertex-growing nature, \vaat's ``optimal plan'' will have to process the costly sub-query $Q(\{v_1, v_2, v_3, v_4, v_5\})$. On US dataset, \vaat consistently outperforms \eaat for these queries. This is because that US does not produce massive intermediate results as LJ, thus \eaat's shuffling cost consistently dominates.

While processing complex queries like $q_8$ and $q_9$, we can study varieties of join plans for \eaat and \vaat. First of all, we want the readers to note that \eaat's join plan for $q_8$ is different from the optimal plan originally given \cite{Lai2016}. The original ``optimal'' plan computes $q_8$ by joining two tailed triangles (triangle tailed with an edge), while this alternative plan works better by joining the uppers ``house-shape'' sub-query with the bottom triangle. In theory, the tailed triangle has worse-case bound (AGM bound \cite{Ngo2018}) of $O(M^{2})$, smaller than the house's $O(M^{2.5})$, and \eaat's actually favors this plan based on cost estimation. However, we find out that the number of tailed triangles is very close to that of the houses on GO, which renders costly process for the original plan to join two tailed triangles. This indicates insufficiency of both cost estimation proposed in \cite{Lai2016} and worst-case optimal bound \cite{Ammar2018} while computing the join plan, which will be further discussed in \refsec{discussions}.

 Secondly, it is worth noting that we actually report the result of \vaat for $q_9$ while using the \crystaljoin plan, as it works better than \vaat's original ``optimal'' plan. For $q_9$, \crystaljoin will first compute $Q(V_Q^{cc})$, namely the 2-path $\{v_1, v_3, v_5\}$, thereafter it can compress all remaining vertices $v_2, v_4$ and $v_6$. In comparison, the ``optimal'' plan can only compress $v_2$ and $v_6$. In this case, \crystaljoin performs better because it configures larger compression. In \cite{Qiao2017}, the authors proved that it renders maximum compression to use the vertex cover as the uncompressed core. However, this may not necessarily result in the best performance, considering that it can be costly to compute the core part. In our experiments, the unlabelled $q_4$, $q_8$ and labelled $q_8$ are cases that \crystaljoin plan performs worse than the original \bigjoin plan (with \compression optimization), where \crystaljoin plan does not render strictly larger compression while having to process the costly core part. As a result, we only recommend \crystaljoin plan when it leads to strictly larger compression.

The final observation is that the computation time dominates most of the evaluated cases, except \eaat's $q_8$, \vaat and \multiway's $q_9$ on US. We will further discuss this in Exp-3.

\stitle{Exp-3 All-Around Comparisons.} In this experiment, we run $q_1 - q_6$ using \eaat, \vaat, \multiway and \rep across the datasets GP, LJ, UK, EU and FS. We also run \vaat with \crystaljoin plan in $q_4$ as it is the only query that renders different \crystaljoin plan from \bigjoin plan, and the results show that the performance with \bigjoin plan is consistently better. We report the results in \reffig{all_cases}, where the communication time is plotted as gray filling. As a whole, among all 35 test cases, \rep achieves the best 85\% completion rate, followed by \vaat and \eaat which complete 71.4\% and 68.6\% respectively, and \multiway performs the worst with just 8.6\% completion rate.

\begin{figure*}[htb]
    \centering
    \begin{subfigure}[b]{\textwidth}
    \centering
        \includegraphics[height = 0.2in]{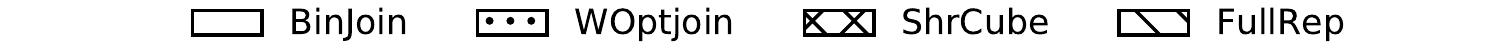}
    \end{subfigure}%
    \\
    \begin{subfigure}[b]{0.48\textwidth}
        \centering
        \includegraphics[height=1.6in]{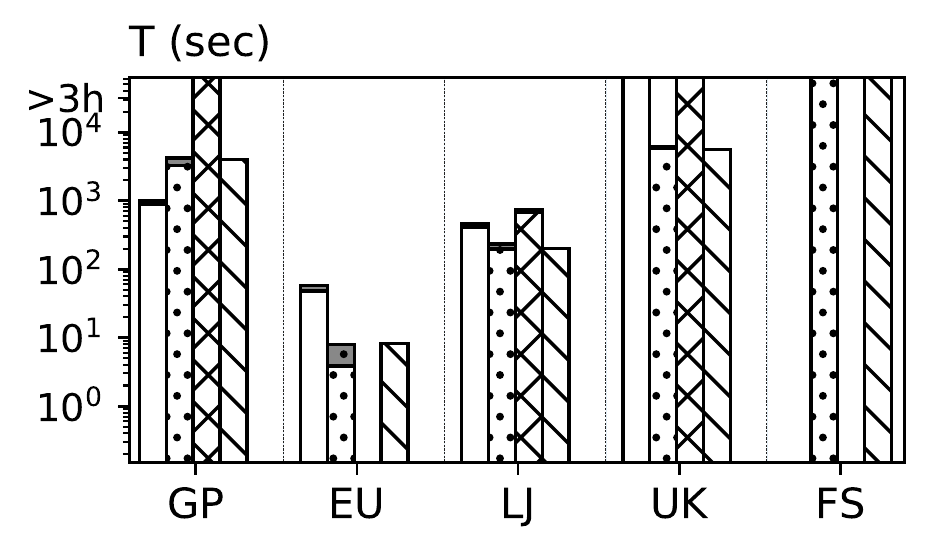}
        \caption{$q_1$}
    \end{subfigure}%
    ~
    \begin{subfigure}[b]{0.48\textwidth}
        \centering
        \includegraphics[height=1.6in]{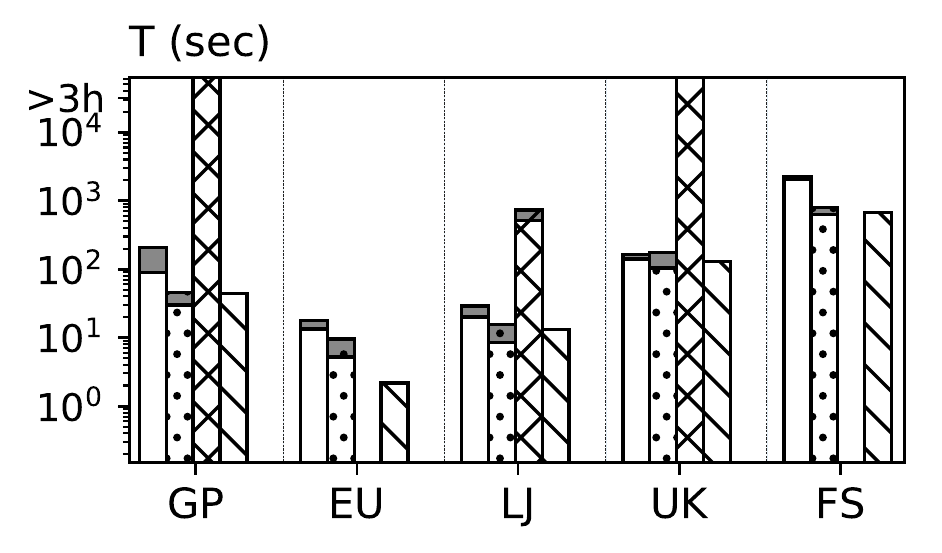}
        \caption{$q_2$}
    \end{subfigure}%
    \\
    \begin{subfigure}[b]{0.48\textwidth}
        \centering
        \includegraphics[height=1.6in]{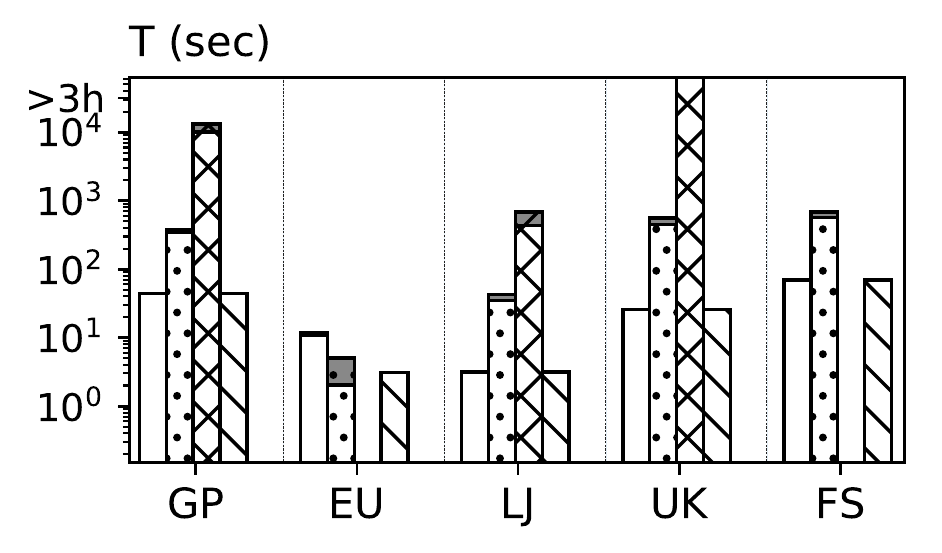}
        \caption{$q_3$}
    \end{subfigure}%
    ~
    \begin{subfigure}[b]{0.48\textwidth}
        \centering
        \includegraphics[height=1.6in]{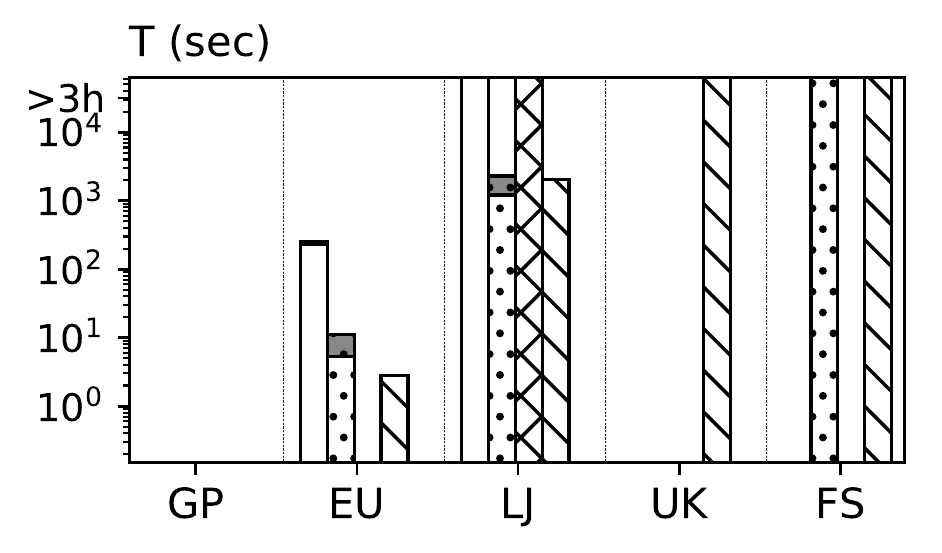}
        \caption{$q_4$}
    \end{subfigure}%
    \\
    \begin{subfigure}[b]{0.48\textwidth}
        \centering
        \includegraphics[height=1.6in]{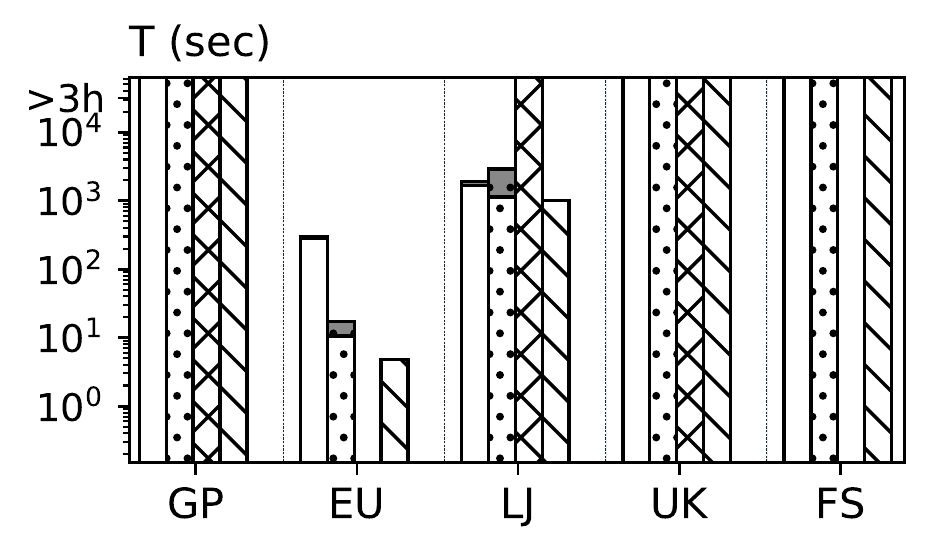}
        \caption{$q_5$}
    \end{subfigure}%
    ~
    \begin{subfigure}[b]{0.48\textwidth}
        \centering
        \includegraphics[height=1.6in]{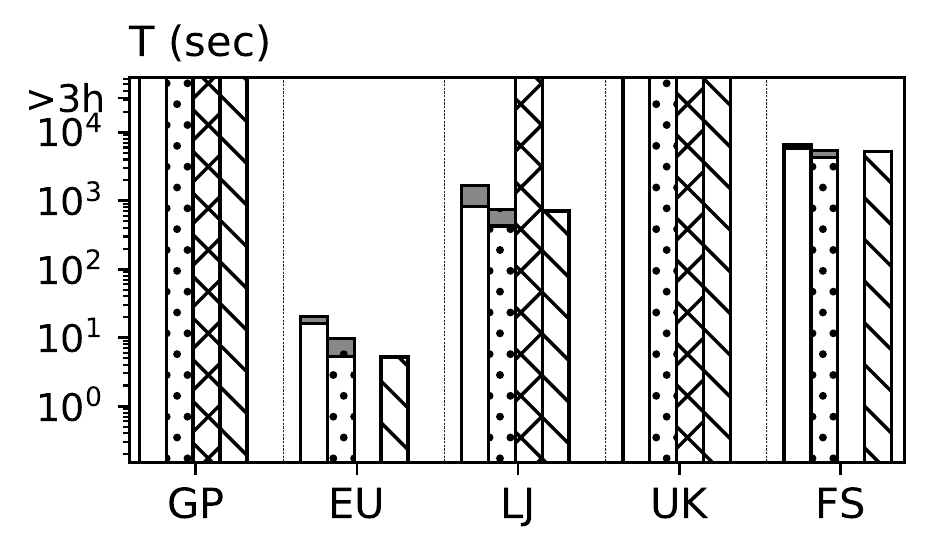}
        \caption{$q_6$}
    \end{subfigure}%
    \caption{All-around comparisons.}
    \label{fig:all_cases}
\end{figure*}

\rep typically outperforms the other strategies. Observe that \vaat's performance is often very close to \rep. The reason is that the \vaat's computing plans for these evaluated queries are similar to ``DualSim'' adopted by \rep. The extra communication cost of \vaat has been reduced to very low while adopting \trindex optimization. While comparing \vaat with \eaat, \eaat is better for $q_3$, a clique query (join unit) that requires no join (a case of embarrassingly parallel). \eaat performs worse than \vaat in most other queries, which, as we mentioned before, is due to the costly shuffling. There is an exception - querying $q_1$ on GP - where \eaat performs better than both \rep and \vaat. 
We explain this using our best speculation. GP is a very dense graph, where we observe nearly 100 vertices with degree around 10,000. To process $q_1$, after computing the sub-query $Q(\{v_1, v_2, v_4\})$, \vaat (and ``DualSim'') processes the intersection of $v_1$ and $v_4$ (their matches) for $v_3$. Those larger-degree vertices are now frequently pairing, leading to expensive intersection. In comparison, \eaat computes $q_1$ by joining the sub-query $Q(\{v_1, v_2, v_3\})$ with $Q(\{v_1, v_3, v_4\})$. Because both strategies compute $Q(\{v_1, v_2, v_3\})$, we consider how \eaat computes $Q(\{v_1, v_3, v_4\})$. \eaat first locate the matched vertex of $v_3$, then matches $v_1$ and $v_4$ among its neighbors, which is generally cheaper than intersecting the neighbors of $v_1$ and $v_4$ to compute $v_3$. 
Due to the existence of these high-degree pairs, the cost \vaat's intersection can exceed \eaat's shuffling.


We observe that the computation time $T_{comp}$ dominates in most cases as we mentioned in Exp-2. This is trivially true for \multiway and \rep, but it may not be clearly so for \vaat and \eaat given that they all need to transfer a massive amount of intermediate data. We investigate this and find out two potential reasons. The first one attributes to \timely's highly optimized communication component, which allows the computation to overlap communication by using extra threads to receive and buffer the data from the network so that it can be mostly ready for the following computation. The second one is the fast network. We re-run these queries using the 1GBps switch, while the results show the opposite trend that the communication time $T_{comm}$ in turn takes over.

\stitle{Exp-4 Web-Scale.} We run the SY datasets in the AWS cluster of 40 instances. Note that \rep can not be used as SY is larger than the machine's memory. We use the queries $q_2$ and $q_3$, and present the results of \eaat and \vaat (\multiway fails all cases due to \oom) in \reftable{web_scale_exp}. The results are consistent with the prior experiments, but observe that the gap between \eaat and \vaat while querying $q_1$ is larger. This is because that we now deploy 40 AWS instances, and \eaat's shuffling cost increases.

\begin{table}[htb]
    \centering
    \caption{The web-scale queries.}
    \label{tab:web_scale_exp}
     \begin{tabular}{|c|c|c|} 
     \hline
     Queries & \eaat ($T_{comp}$)/s & \vaat ($T_{comp}$)/s \\
     \hline\hline
     $q_2$ & 8810 (6893) & 1751 (1511) \\
    \hline
     $q_3$ & 76 (75) & 518 (443) \\   
    \hline
     \end{tabular}
\end{table}

\subsection{Labelled Experiments}
\label{sec:labelled_experiment}
We use the LDBC social network benchmarking (SNB) \cite{Ldbc} for labelled matching experiment due to the lack of labelled big graphs in the public. SNB provides a data generator that generates a synthetic social network of required statistics, and a document \cite{Ldbc_doc} that describes the benchmarking tasks, in which the complex tasks are actually subgraph matching. The join plans of \eaat and \vaat for labelled experiments are generated as unlabelled case, but we use the label frequencies to break tie.

\stitle{Datasets.} We list the datasets and their statistics in \reftable{labelled_datasets}. These datasets are generated using the "Facebook" mode with a duration of 3 years. The dataset's name, denoted as DG$x$, represents a scale factor of $x$. The labels are preprocessed into integers. 

\begin{table}[htb]
\centering
\caption{The labelled datasets.}
\label{tab:labelled_datasets}
 \begin{tabular}{|r|c|c|c|c|c|} 
 \hline
 Name & $|V_G|$ & $|E_G|$ & $\overline{d}_G$ & $D_G$ & \# Labels \\
 \hline\hline
 DG10 & 29.99 & 176.48 & 11.77 & 4,282,812 & 10 \\
\hline
 DG60 & 187.11 & 1246.66 & 13.32 & 26,639,563 & 10 \\   
\hline
 \end{tabular}
\end{table}

\begin{figure}[htb]
  \centering
  \includegraphics[scale=0.65]{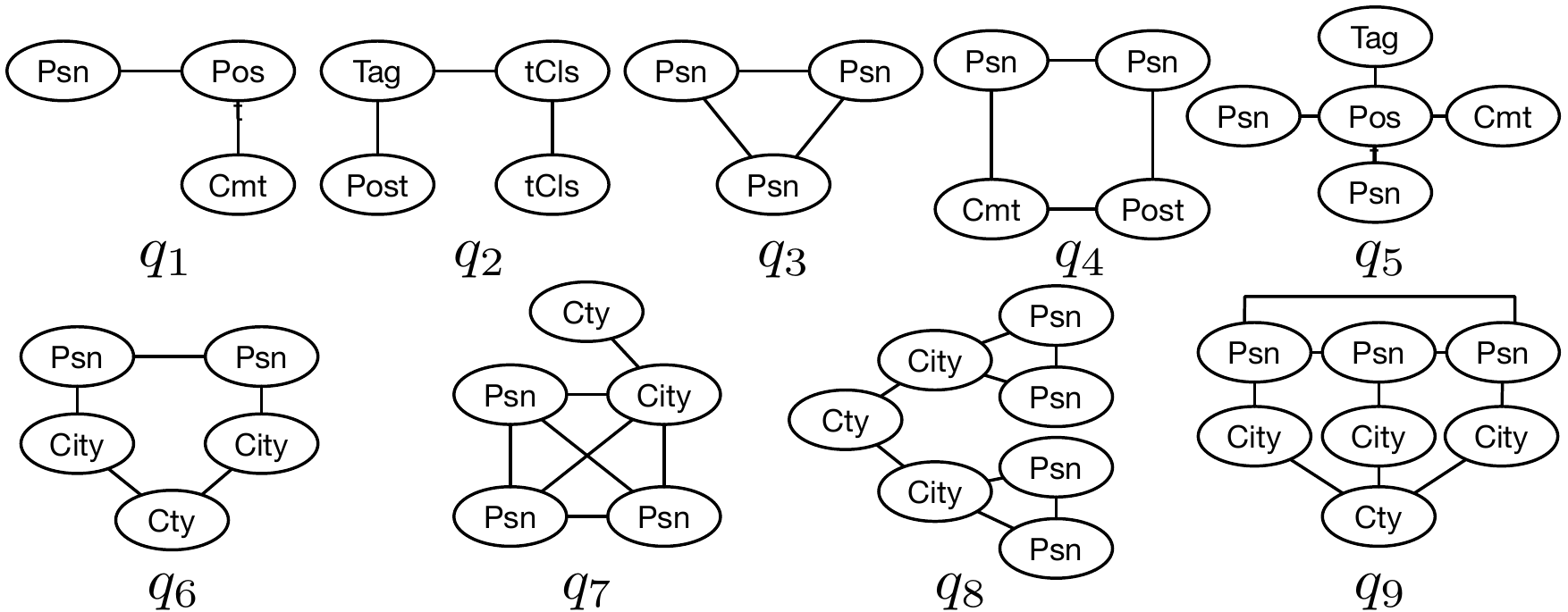}
  \caption{\small{Labelled queries.}}
  \label{fig:labelled_queries}
\end{figure}

\stitle{Queries.} The queries, shown in \reffig{labelled_queries}, are selected from the SNB's complex tasks with some adaptions: (1) removing the direction of the edges; (2) removing the edge labels; (3) using one-hop edge for multi-hop edges; (4) removing the ``no edge'' condition; (5) removing all properties except the node types as labels. For (1) and (2), note that our current implementation can support both cases, and we do the adaptations for consistency and simplicity. For (3) and (4), we adapt them because currently they do not conform with the subgraph matching problem studied in this paper. For (5), it is due to our current limitation in supporting property graph. We leave (3), (4) and (5) as interesting future work.

\stitle{Exp-5 All-Around Comparisons.}  We now conduct the experiment using all queries on DG10 and DG60, and present the results in \reffig{all_labelled_cases}. Here we compute the join plans for \eaat and \vaat by using the unlabelled method, but further using the label frequencies to break tie. The gray filling again represents communication time. \rep outperforms the other strategies in many cases, except that it performs slightly slower than \eaat for $q_3$ and $q_5$. This is because that $q_3$ and $q_5$ are join units, and \eaat processes them locally in each machine as \rep, and it does not build indices as ``CFLMatch'' used in \rep. When comparing to \vaat,   Among all these queries, we only have $q_8$ that configures different \crystaljoin plan (w.r.t. \bigjoin plan) for \vaat. The results show that the performance of \vaat drops about 10 times while using \crystaljoin plan. Note that the core part of $q_8$ is a 5-path of ``Psn-City-Cty-City-Psn'' with enormous intermediate results. As we mentioned in unlabelled experiments, it may not always be wise to first compute the vertex-cover-induced core.

We now focus on comparing \eaat and \vaat. There are three cases that intrigue us. Firstly, observe that \eaat performs much better than \vaat while querying $q_4$. The reason is high intersection cost as we discovered on GP dataset in unlabelled matching. Secondly, \eaat performs worse than \vaat in $q_7$, which again is because of \eaat's costly shuffling. The third case is $q_9$, the most complex query in the experiment. \eaat performs much better while querying $q_9$. The bad performance of \vaat comes from the long execution plan together with costly intermediate results. The two algorithms all expand the three ``Psn''s, and then grow via one of the ``City''s to ``Cty'', but \eaat approaches this using one join (a triangle $\Join$ a \ttwig), while \vaat will first expand to ``City'' then further ``Cty'', and the ``City'' expansion is the culprit of the slower run.

\begin{figure}[htb]
    \centering
    \begin{subfigure}[b]{\textwidth}
    \centering
    \includegraphics[height = 0.2in]{vary_dataset_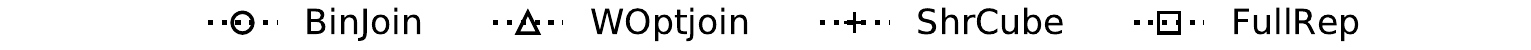}
    \end{subfigure}%
    \\
    \begin{subfigure}[b]{\textwidth}
    \centering
    \includegraphics[height=1.6in]{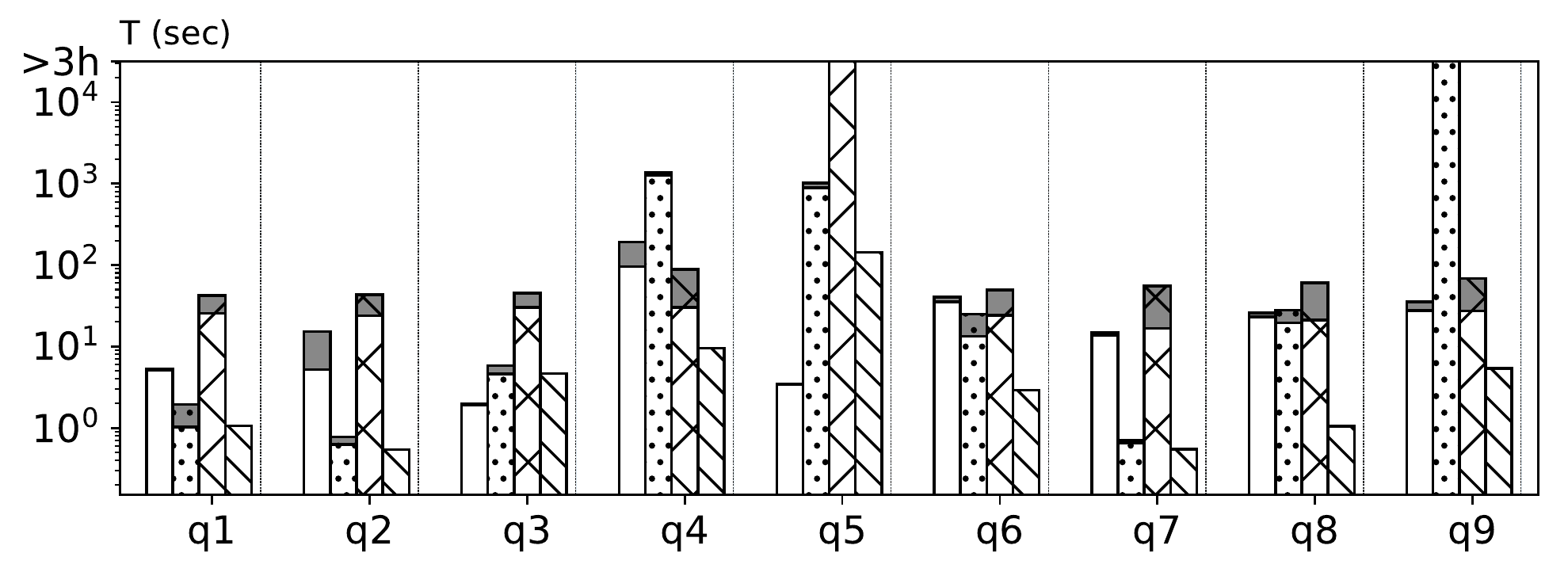}
    \caption{DG10}\label{fig:all_labelled_cases_dg10}
    \end{subfigure}%
    \\
    \begin{subfigure}[b]{\textwidth}
    \centering
    \includegraphics[height=1.6in]{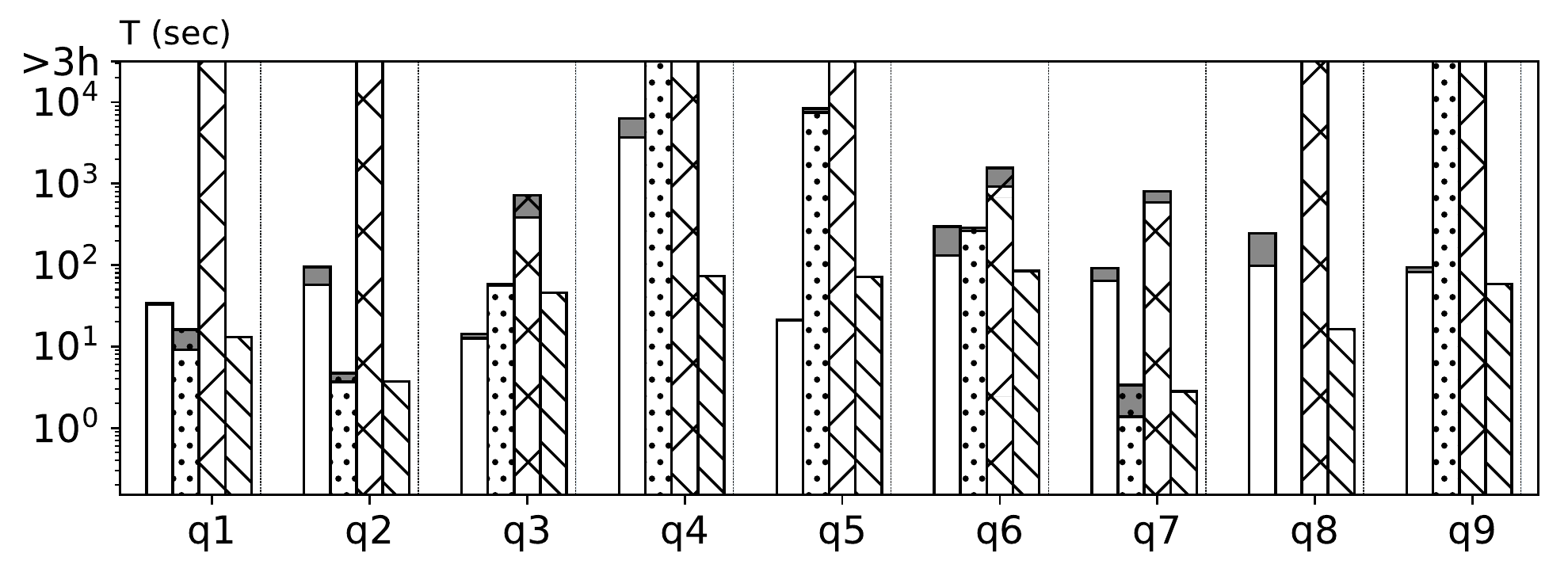}
    \caption{DG60}\label{fig:all_labelled_cases_dg60}
    \end{subfigure}%
    \caption{All-around comparisons of labelled matching.}\label{fig:all_labelled_cases}
\end{figure}

\section{Discussions and Future Work.}
\label{sec:discussions}
We discuss our findings and potential future work based on the experiments in \refsec{experiments}. Eventually, we summarize the findings into a practical guide.

\stitle{Strategy Selection.} \emph{\rep is obviously the preferred choice when the machine can hold the graph data, while both \vaat and \eaat are good alternatives when the graph is larger than the capacity of the machine.} For \eaat and \vaat, on one side, \eaat may perform worse than \vaat (e.g. unlabelled $q_2$, $q_4$, $q_5$) due to the expensive shuffling operation, on the other side, \eaat can also outperform \vaat (e.g. unlabelled and labelled $q_9$) while avoiding costly sub-queries due to query decomposition. One way to choose between \eaat and \vaat is to compare the cost of their respective join plans, and select the one with less cost. For now, we can either use cost estimation proposed in \cite{Lai2016}, or summing the worst-case bound, but none of them consistently gives the best solution, as will be discussed in ``Optimal Join Plan''. Alternatively, we refer to ``EmptyHeaded'' \cite{Aberger2016} to study a potential hybrid strategy of \eaat and \vaat. Note that ``EmptyHeaded'' is developed in single-machine setting, and it does not take into consideration the impact of \compression, we hence leave such hybrid strategy in the distributed context as an interesting future work. 

\stitle{Optimizations.} \emph{Our experimental results suggest always using \batching, using \trindex when each machine has \textbf{sufficient} memory to hold ``triangle partition'', and using \compression when the data graph is not very sparse (e.g. $\overline{d}_G \geq 5$).} \batching often does not impact performance, so we recommend always using \batching due to the unpredictability of the size of (intermediate) results. \trindex is critical for \eaat, and it can greatly improve \vaat by reducing communication cost, while it requires extra storage to maintain ``triangle partition''. Amongst the evaluated datasets, each ``triangle partition'' maintains an average of 30\% data in our 10-machine cluster. Thus, we suggest a memory threshold of $60\% |E_G|$ (half for graph and half for running algorithm) for \trindex in a cluster of the same or larger scale. Note that the threshold does not apply to extremely dense graph. Among the three optimizations, \compression is the primary performance booster that improves the performance of \eaat and \vaat by 5 times on average in all but the cases on the very sparse road networks. For such very sparse data graphs, \compression can render more cost than benefits.   


\stitle{Optimal Join Plan.} \emph{It is challenging to systematically determine the optimal join plans for both \eaat and \vaat.} From the experiments, we identify three impact factors: (1) the worst-case bound; (2) cost estimation based on data statistics; (3) favoring the optimizations, especially \compression. All existing works only partially consider these factors, and we have observed sub-optimal join plans in the experiments. For example, \eaat bases the ``optimal'' join plan on minimizing the cost estimation, but the join plan does not render the best performance for unlabelled $q_8$; \vaat follows the worst-case optimality, while it may encounter costly sub-queries for labelled and unlabelled $q_9$; \crystaljoin focuses on maximizing the compression, while ignoring the facts that the vertex-cover-induced core part itself can be costly to compute. Additionally, there are other impact factors such as the partial orders of query vertices and the label frequencies, which have not been studied in this work due to short of space. It is another very interesting future work to thoroughly study the optimal join plan while considering all above impact factors.

\stitle{Computation vs. Communication.} \emph{We argue that distributed subgraph matching nowadays is a computation-intensive task.} This claim holds when the cluster configures high-speed network (e.g. $\geq 10$GBps), and the data processor can efficiently overlap computation with communication. Note that computation cost (either \eaat's join or \vaat's intersection) is lower-bounded by the output size that is equal to the communication cost. Therefore, computation becomes the bottleneck if the network condition is good to guarantee the data to be delivered in time. Nowadays, the bandwidth of local cluster commonly exceeds 10GBps, and the overlapping of computation and communication is widely used in distributed systems (e.g. Spark \cite{Zaharia2010}, Flink \cite{Carbone2015}). 
As a result, we tend to see distributed subgraph matching as a computation-intensive task, and we advocate future research to devote more efforts into optimizing the computation while considering the following perspectives: (1) the new advancements of hardware, for example the co-processing on GPU in the coupled CPU-GPU architectures \cite{He2014} and the SIMD programming model on modern CPU \cite{Inoue2014}; (2) general computing optimizations such as load balancing strategy and cache-aware graph data accessing \cite{Wei2016}. 

\stitle{A Practical Guide.} Based on the experimental findings, we propose a practical guide for distributed subgraph matching in \reffig{program_guide}. Note that this program guide is based on current progress of the literature, and future work is needed, for examples to study the hybrid strategy and the impact factors of the optimal join plan, before we can arrive at a solid decision-making to choose between \eaat and \vaat.

\begin{figure}[htb]
    \centering
    \includegraphics[scale=0.75]{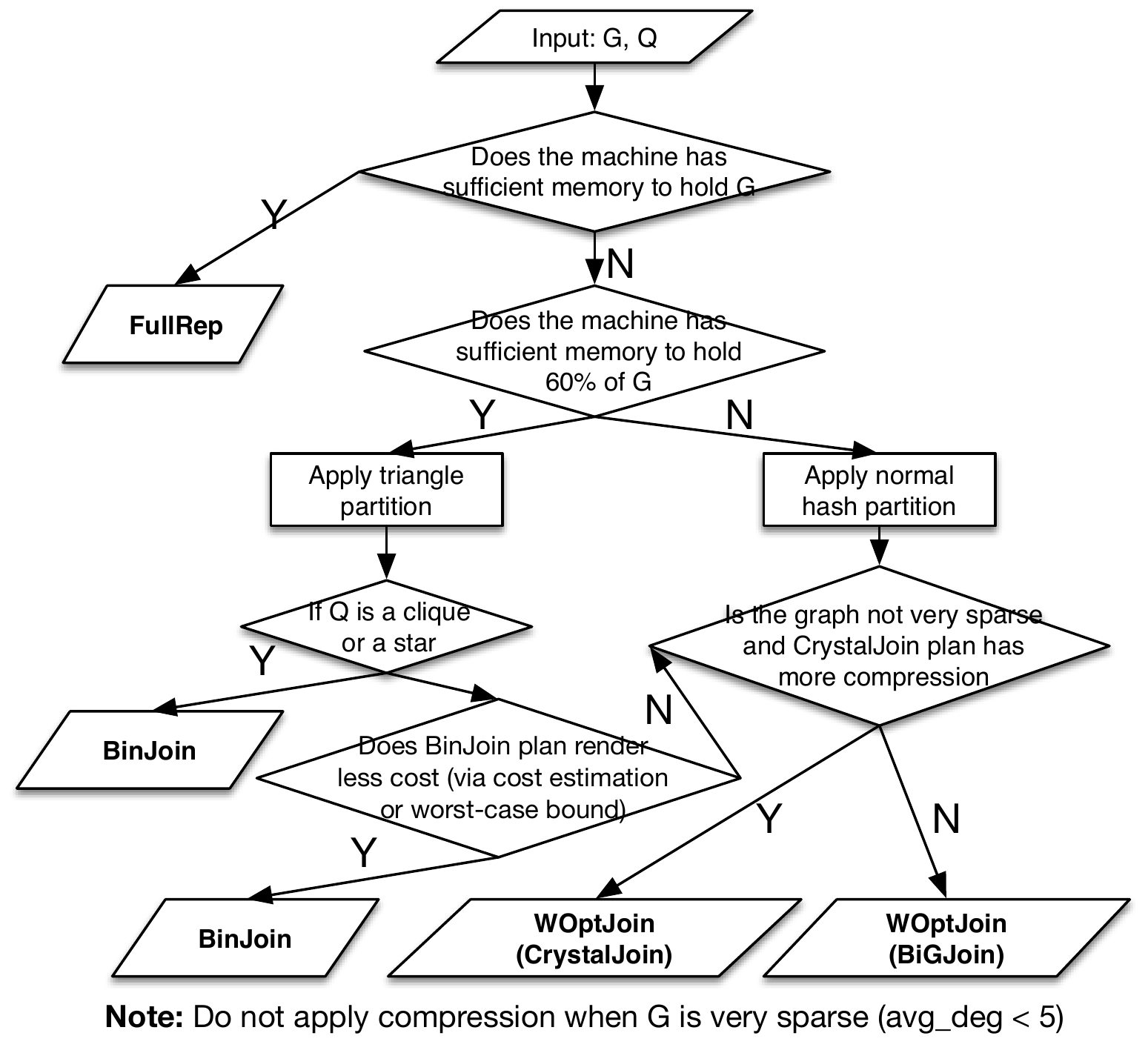}
    \caption{\small{A practical guide of distributed subgraph matching.}}
    \label{fig:program_guide}
  \end{figure}
  
\section{Related Work}
\label{sec:related_works}
\stitle{Isomorphism-based Subgraph Matching.} In the labelled case, Shang et al. \cite{Shang2008} used the spanning tree of the query graph to filter infeasible results. Han et al. \cite{Han2013} observed the importance of matching order. In \cite{Ren2015}, the authors proposed to utilize the symmetry properties in the data graph to compress the results. Bi et al. \cite{Bi2016} proposed \cflmatch based on the ``core-forest-leaves'' matching order, and obtained performance gain by postponing the notorious cartesian product. 

The unlabelled case is also known as subgraph listing/enumeration, and due to the gigantic (intermediate) results, people have been either seeking scalable algorithms in parallel, or devising techniques to compress the results. Other than the algorithms studied in this paper (\refsec{algorithms}), Kim et al. proposed the external-memory-based parallel algorithm {\scshape DualSim} \cite{Kim2016}, which maintains the data graph  in blocks on the disk, and matches the query graph by swapping in/out blocks of data to improve I/O efficiency. 

\stitle{Incremental Subgraph Matching.} Computing subgraph matching in a continuous context has recently drawn a lot of attentions. Fan et al. \cite{Fan2011} proposed incremental algorithm that identifies a portion of the data graph affected by the update regarding the query. The authors in \cite{Choudhury2015} used the join scheme as \eaat algorithms (\refsec{edge-at-a-time}). The algorithm maintained a left-deep join tree for the query, with each vertex maintaining a partial query and the corresponding partial results. Then one can compute the incremental answers of each partial query in response to the update, and utilizes the join tree to re-construct the results. Graphflow \cite{Kankanamge2017} solved incremental subgraph matching using join, in the sense that the incremental query can be transformed into $m$ independent joins, where $m$ is the number of query edges. Then they used the worst-case-optimal join algorithm to solve these joins in parallel. Most recently, Kim et al. proposed {\scshape TurboFlux} that maintains data-centric index for incremental queries, which achieves good tradeoff between performance and storage.

\stitle{Query Languages and Systems.} As the increasing demand of subgraph matching in graph analysis, people start to investigate easy-use and highly expressive subgraph matching language. Neo4j introduced \textit{Cypher} \cite{neo4j}, and now people are working on standardizing the semantics of subgraph matching based on Cypher \cite{Angles2017}. Gradoop \cite{Junghanns2017} is a system based on Apache Hadoop that translates a Cypher query into a MapReduce job. Aberger et al. proposed \textsc{EmptyHeaded} based on relational semantics for graph processing, in which they leveraged worst-case optimal join algorithm to solve subgraph matching. Arabesque \cite{Teixeira2015} was designed to solve graph mining (continuously computing frequent subgraphs) at scale, while it can be configured for single subgraph query.

\section{Conclusions}
\label{sec:conclusion}
In this paper, we implement four strategies and three general-purpose optimizations for distributed subgraph matching based on \timely dataflow system, aiming for a systematic, strategy-level comparisons among the state-of-the-art algorithms. Based on thorough empirical analysis, we summarize a practical guide, and we also motivate interesting future work for distributed subgraph matching. 


%
%
\bibliographystyle{abbrv}
\bibliography{vldb}  

\appendix
\section{Auxiliary Experiments}
\stitle{Exp-6 Scalability of Unlabelled Matching.} We vary the number of machines as 1, 2, 4, 6, 8, 10, and run the unlabelled queries $q_1$ and $q_2$ to see how each strategy (\eaat, \vaat, \multiway and \rep) scales out. We further evaluate ``Single Thread'', a serial algorithm that is specially implemented for these two queries. According to \cite{McSherry2015}, we define COST of a strategy as the number of workers it needs to outperform the ``Single Thread'', which is a comprehensive measurement of both efficiency and scalability. In this experiment, we query $q_1$ and $q_2$ on the popular dataset LJ, and show the results in \reffig{scalability}. Note that we only plot the communication and memory consumption for $q_1$, as $q_2$ follows similar trend. We also test on the other datasets, such as the dense dataset GP, the results are also similar.

\begin{figure*}[htb]
    \centering
    \begin{subfigure}[b]{\textwidth}
    \centering
        \includegraphics[height = 0.2in]{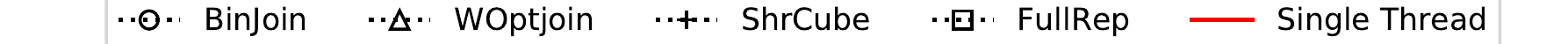}
    \end{subfigure}%
    \\
    \begin{subfigure}[b]{0.48\textwidth}
        \centering
        \includegraphics[height=1.6in]{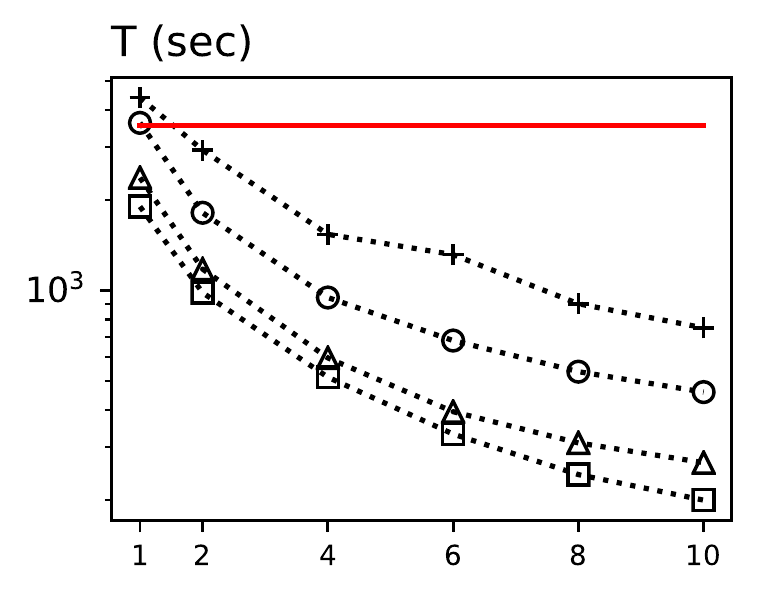}
        \caption{$q_1$ running time}
        \label{fig:scalability_time_q1}
    \end{subfigure}%
    ~
    \begin{subfigure}[b]{0.48\textwidth}
        \centering
        \includegraphics[height=1.6in]{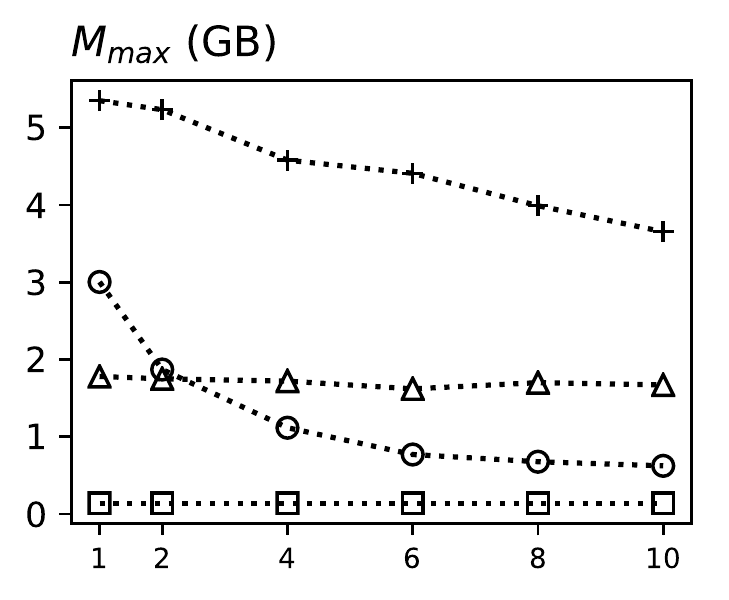}
        \caption{$q_1$ memory usage}
        \label{fig:scalability_mem_q1}
    \end{subfigure}%
    \\
    \begin{subfigure}[b]{0.48\textwidth}
        \centering
        \includegraphics[height=1.6in]{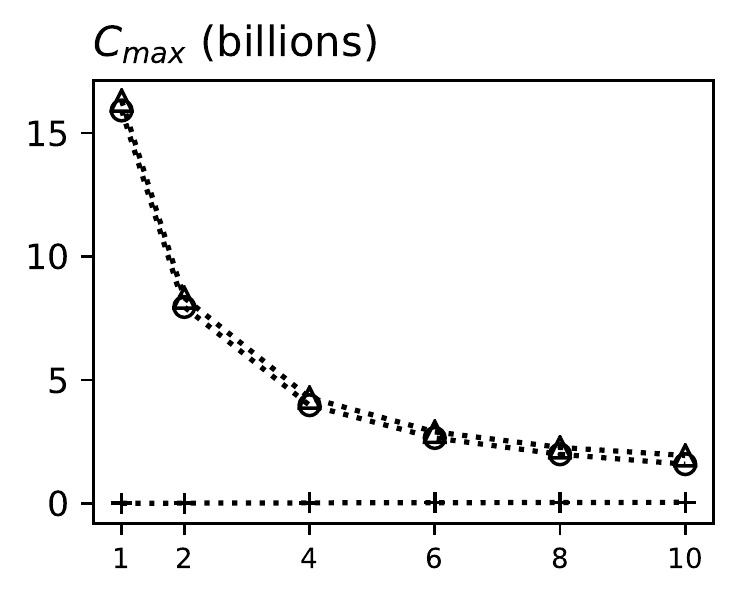}
        \caption{$q_1$ communication cost}
        \label{fig:scalability_comm_q1}
    \end{subfigure}%
    ~
	\begin{subfigure}[b]{0.48\textwidth}
        \centering
        \includegraphics[height=1.6in]{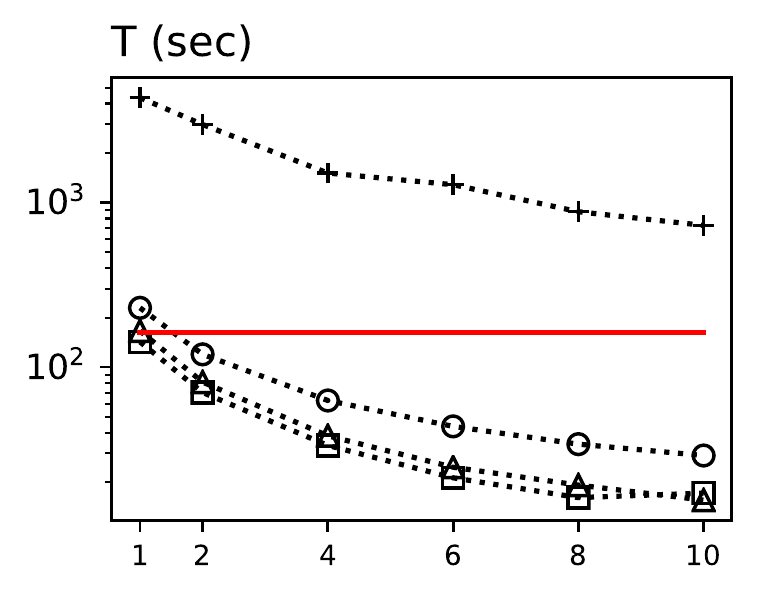}
        \caption{$q_2$ running time}
        \label{fig:scalability_time_q2}
    \end{subfigure}%
    \\
    \begin{subfigure}[b]{0.48\textwidth}
        \centering
        \includegraphics[height=1.6in]{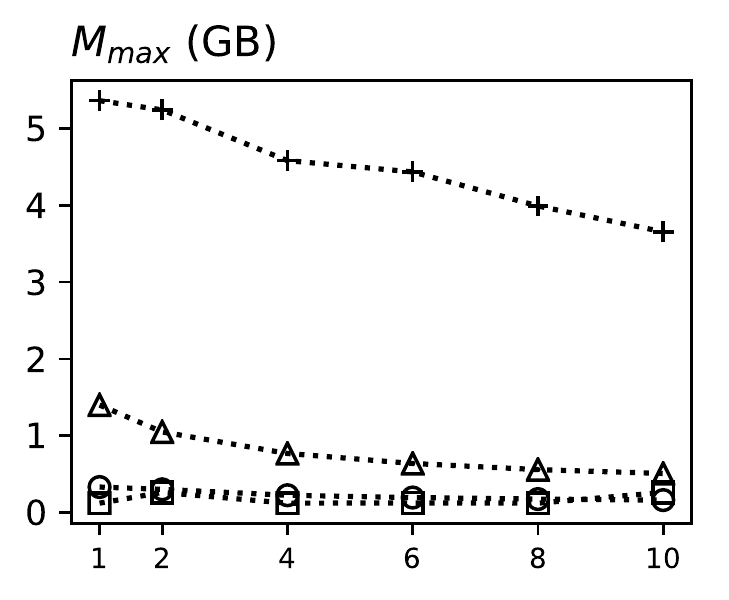}
        \caption{$q_2$ memory usage}
        \label{fig:scalability_mem_q2}
    \end{subfigure}%
    ~
    \begin{subfigure}[b]{0.48\textwidth}
        \centering
        \includegraphics[height=1.6in]{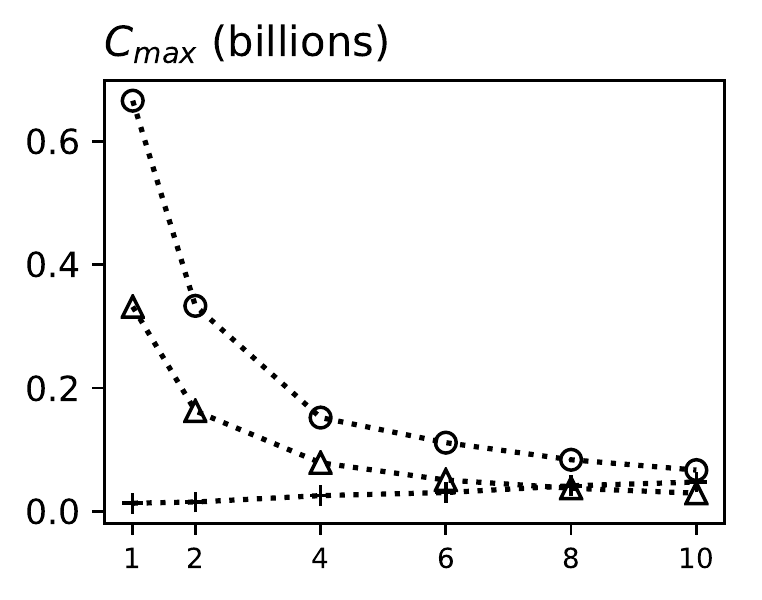}
        \caption{$q_2$ communication cost}
        \label{fig:scalability_comm_q2}
    \end{subfigure}%
    \caption{Scalability experiment: querying $q_1$ and $q_2$ on LJ.}
    \label{fig:scalability}
\end{figure*} 

All strategies demonstrate reasonable scaling regarding both queries. In terms of COST, note that \rep is slightly larger than 1, because ``DualSim'' is implemented in general for arbitrary query, while ``SingleThread'' uses a hand-tuned implementation. We first analyze the results of $q_1$. The COST ranking is \rep (1.6), \vaat (2.0), \eaat (3.1) and \multiway (3.7). As expected, \vaat scales worse than \rep, while \eaat scales worse than \vaat because shuffling cost is increasing with the number of machines. In terms of memory consumption, it is trivial that \rep constantly consumes memory of graph size. Due to the use of \batching, both \eaat and \vaat consume very low memory for both queries. Observe that \multiway consumes much more memory than \vaat and \eaat, even more than the graph data itself. This is because that certain worker may receive more edges (with duplicates) than the graph itself, which increases the peak memory consumption. For communication cost, both \eaat and \vaat demonstrate reasonable drops as the increment of machines. \multiway renders much less communication as expected, but it shows increasing trend. This is actually a reasonable behavior of \multiway, as more machines also means more data duplicates. For $q_2$, the COST ranking is \rep (2.4), \vaat (2.75), \eaat (3.82) and \multiway (71.2). Here, \multiway is dramatically larger, with most time spending on deduplication (\refsec{multiway_join}). The trend of memory consumption and communication cost of $q_2$ is similar with that of $q_1$, thus is not further discussed.

\stitle{Exp-7 Vary Desities for Labelled Matching.} Based on DG10, We generate the datasets with densities 10, 20, 40, 80 and 160 by randomly adding edges into DG10. Note that the density-10 dataset is the original DG10 in \reftable{labelled_datasets}. We use the labelled queries $q_4$ and $q_7$ in this experiment, and show the results in \reffig{vary_density}. 

\begin{figure}[htb]
    \centering
    \begin{subfigure}[b]{\textwidth}
    \includegraphics[height = 0.2in]{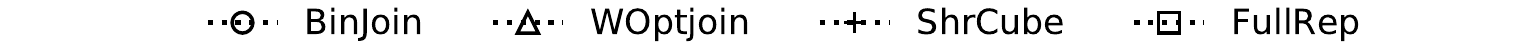}
    \end{subfigure}%
    \\
    \begin{subfigure}[b]{0.48\textwidth}
        \centering
        \includegraphics[height=1.6in]{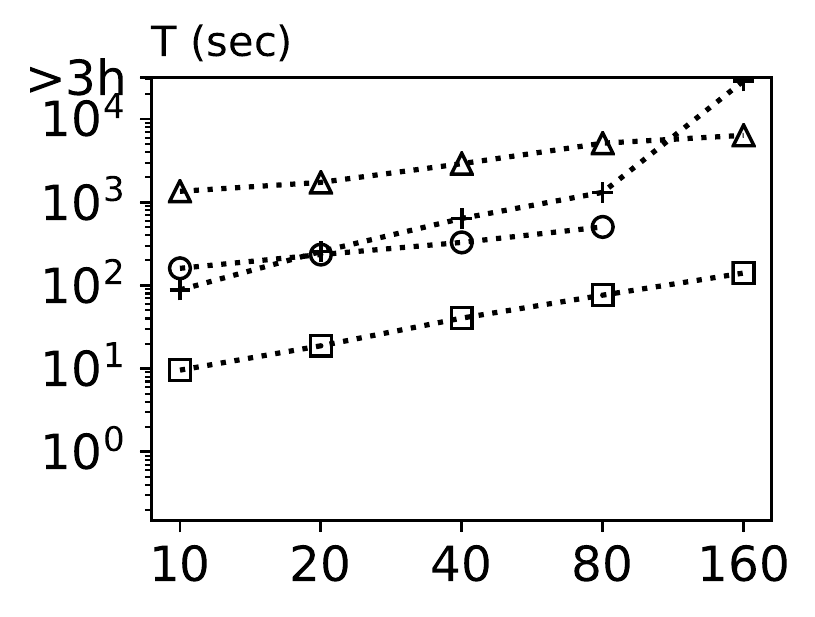}
        \caption{$q_4$}
        \label{fig:vary_density_q4}
    \end{subfigure}%
    ~
    \begin{subfigure}[b]{0.48\textwidth}
        \centering
        \includegraphics[height=1.6in]{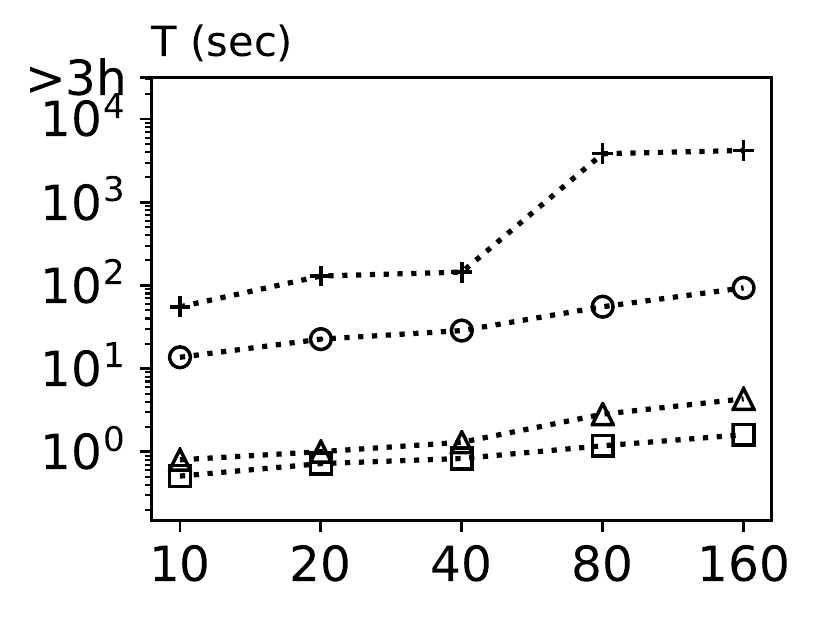}
        \caption{$q_7$}
        \label{fig:vary_density_q7}
    \end{subfigure}%
    \caption{Varying densities of labelled graph.}
    \label{fig:vary_density}
\end{figure}

\stitle{Exp-8 Vary Labels for Labelled Matching.} We generate the datasets with number of labels 0, 5, 10, 15 and 20 based on DG10. Note that there are 5 labels in labelled queries $q_4$ and $q_7$, which are called the target labels. The 10-label dataset is the original DG10. For the one with 5 labels, we will replace each label not in the target labels as one random target label. For the ones with more than 10 labels, we randomly choose some nodes to change their labels into some other pre-defined labels until they contain the required number of labels. For the one with zero label, it degenerates into unlabelled matching, and we use unlabelled version of $q_4$ and $q_7$ instead. The experiment demonstrates the transition from unlabelled matching to labelled matching, where the biggest drop happens for all algorithms. The drops continue with the increment of the number of labels, but less sharply when there are sufficient number of labels ($\geq 10$). Observe that when there are very few labels, for example, the 5-label case of $q_7$, \rep actually performs worse than \eaat and \vaat. The ``CFLMatch'' algorithm \cite{Bi2016} used by \rep relies heavily on label-based pruning. Fewer labels render larger candidate set and more recursive calls, resulting in performance drop of \rep. While fewer labels may enlarge the intermediate results of \eaat and \vaat, but they are relatively small in the labelled case, and does not create much burden for the 10GBps network.

\begin{figure}[htb]
    \centering
    \begin{subfigure}[b]{\textwidth}
    \includegraphics[height = 0.2in]{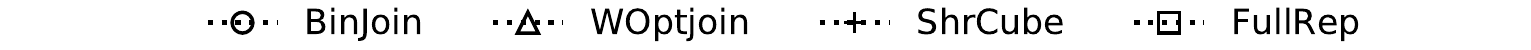}
    \end{subfigure}%
    \\
    \begin{subfigure}[b]{0.48\textwidth}
        \centering
        \includegraphics[height=1.6in]{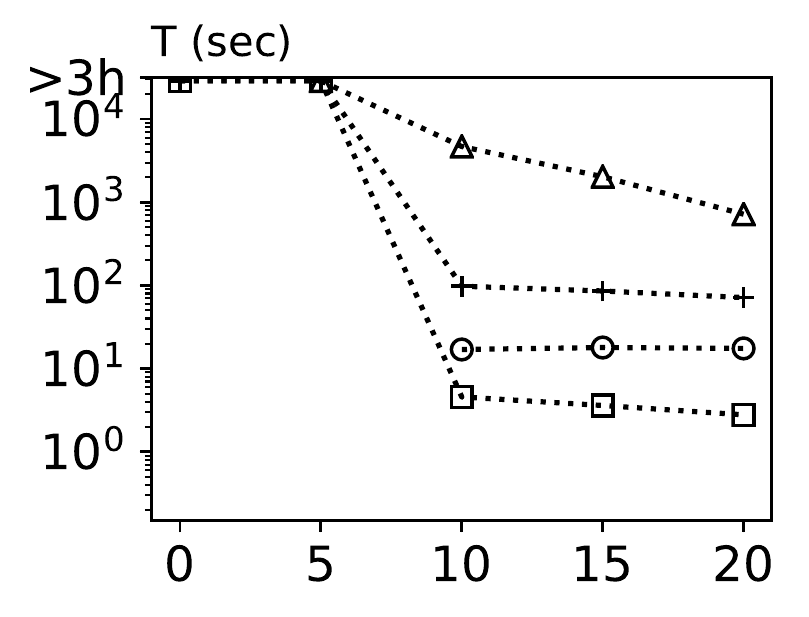}
        \caption{$q_4$}\label{fig:vary_labels_q4}
    \end{subfigure}%
    ~
    \begin{subfigure}[b]{0.48\textwidth}
        \centering
        \includegraphics[height=1.6in]{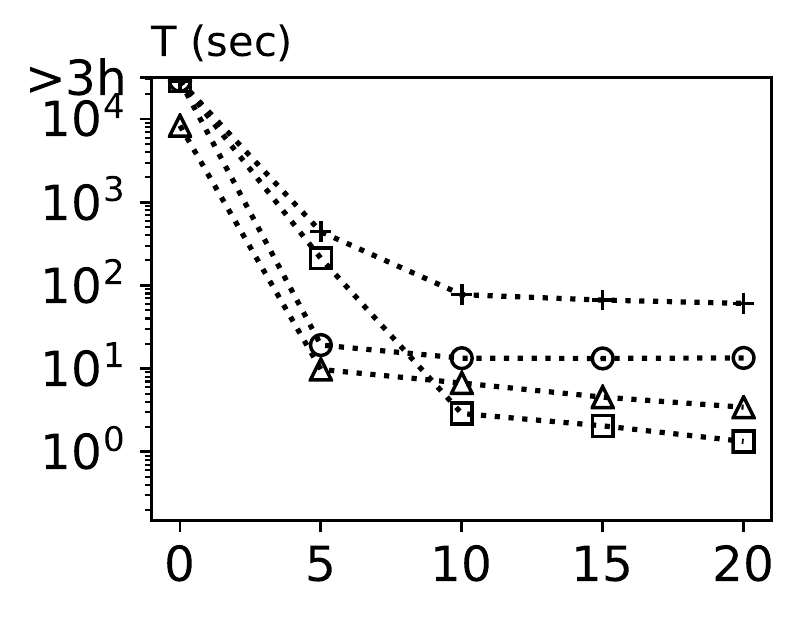}
        \caption{$q_7$}\label{fig:vary_labels_q7}
    \end{subfigure}%
    \caption{Varying labels of labelled graph.}\label{fig:vary_labels}
\end{figure}

\section{Auxiliary Materials}
\stitle{All Query Results.} In \reftable{the_results}, We show the number of results of every successful query on each dataset evaluated in this work. Note that DG10 and DG60 record the labelled queries of $q_1-q_9$.

\begin{table*}[htb]
    \centering
    \small
    \caption{All Query's Results.}
    \label{tab:the_results}
     \begin{tabular}{|c|c|c|c|c|c||c|c|c|c|} 
     \hline
     Dataset & $q_1$ & $q_2$ & $q_3$ & $q_4$ & $q_5$ & $q_6$ & $q_7$ & $q_8$ & $q_9$ \\
     \hline\hline
     GO & 539.58M & 621.18M & 39.88M & 38.20B & 27.80B & 9.28B & 2,168.86B & 330.68B & 1.88T  \\
     \hline
     GP & 1.42T & 1.16T & 78.40B & - & - & - & - & - & -  \\
     \hline
     US & 1.61M & 21,599 & 90 & 117,996 & 2,186 & 1 & 160.93M & 2,891 & 89  \\
     \hline
     LJ & 51.52B & 76.35B & 9.93B & 53.55T & 44.78T & 18.84T & - & - & - \\
     \hline
     UK & 2.49T & 2.73T & 157.19B & - & - & - & - & - & - \\
     \hline
     EU & 905,640 & 2,223 & 6 & 12,790 & 450 & 0 & 342.48M & 436 & 71 \\
     \hline
     FS & - & 185.19B & 8.96B & - & - & 3.18T & - & - & - \\
     \hline
     SY & - & 834.78B & 5.47B & - & - & - & - & - & - \\
     \hline
     \hline
     DG10* & 40.14M & 26.76M & 28.73M & 22.59M & 23.08B & 1.49M4 & 47,556 & 42.56M & 10.07M \\
    \hline
     DG60* & 302.41M & 169.86M & 267.38M & 161.69M & 203.33B & 12.44M & 983,370 & 4.14B & 114.19M \\   
    \hline
     \end{tabular}
\end{table*}  

\comment{
\stitle{Join Plans.} We show the detailed join plans of the queries we used in Exp-3 (\refsec{experiments}). We do not present $P_2$ as it is randomly generated, and gives poor performance in the experiment.

\begin{figure*}[ht]
    \begin{subfigure}[b]{0.45\textwidth}
        \centering
        \includegraphics[height=3.0in]{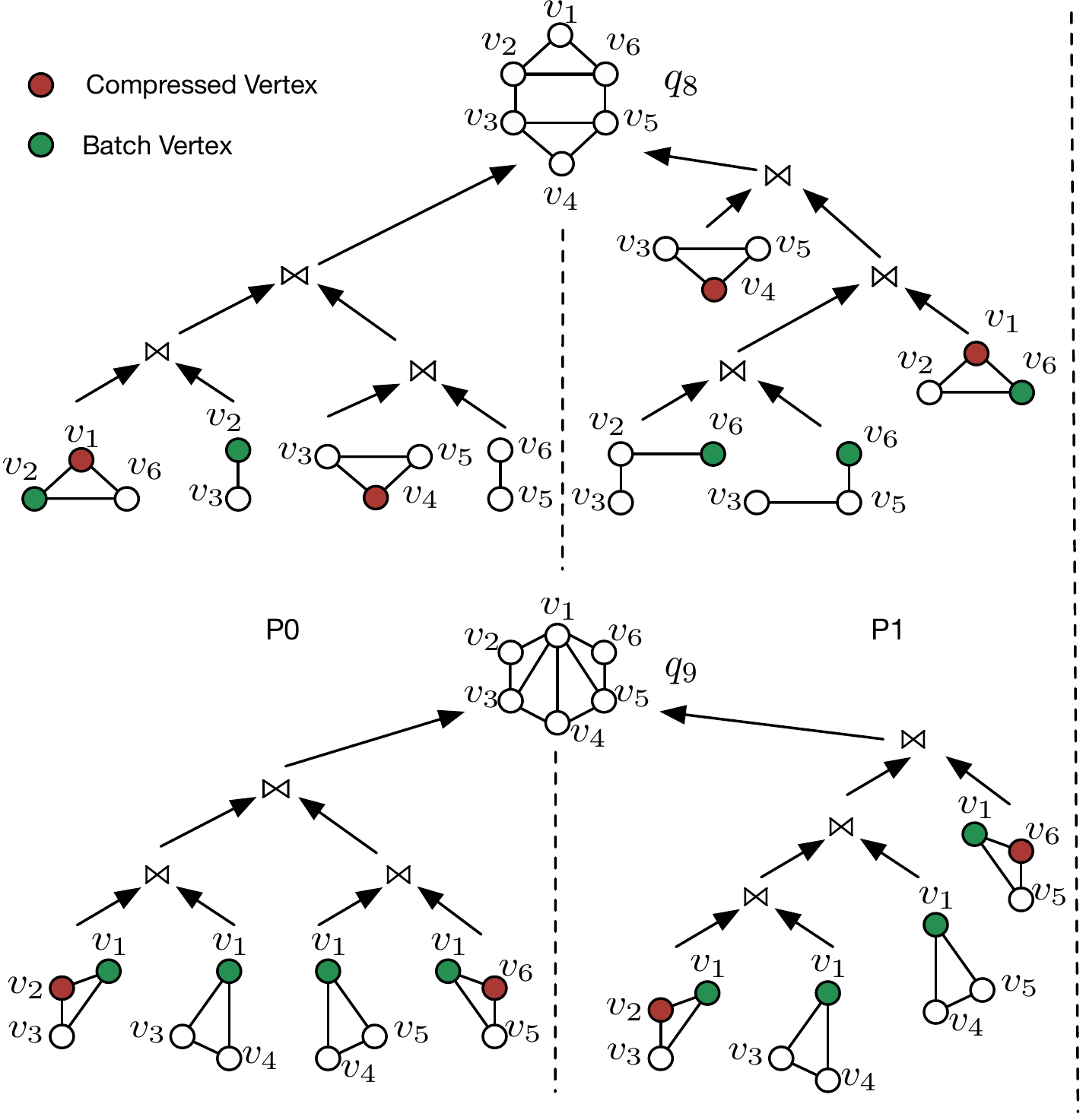}
        \caption{\eaat}
        \label{fig:binjoin_plans}
    \end{subfigure}%
    ~~
    \begin{subfigure}[b]{0.45\textwidth}
        \centering
        \includegraphics[height=3.0in]{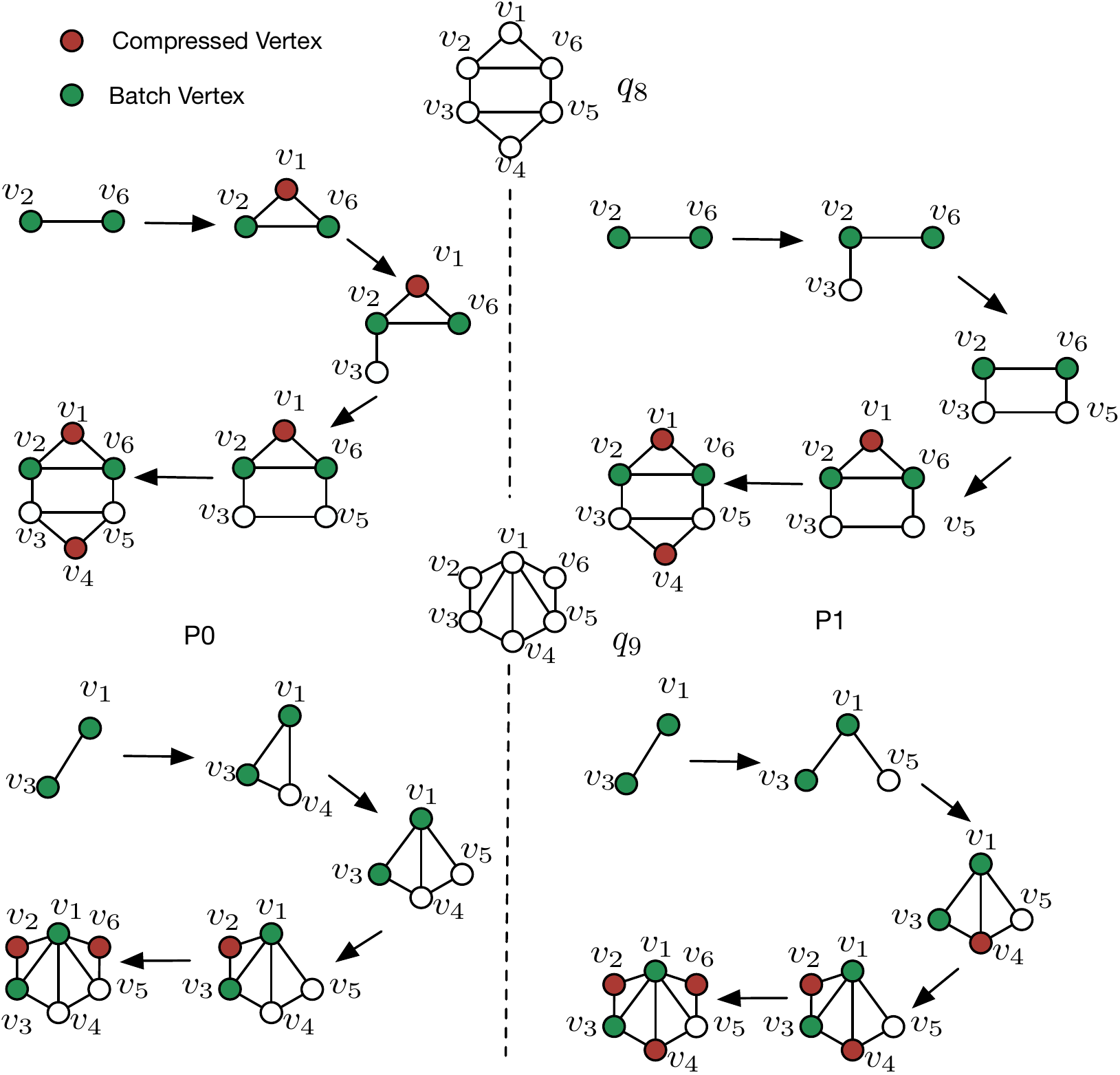}
        \caption{\vaat}
        \label{fig:woptjoin_plan}
    \end{subfigure}%
    \caption{The Join Plans of $q_8$ and $q_9$ in Exp-2.}
    \label{fig:binjoin_plan}
\end{figure*}
}

\end{document}